\newtheorem{theorem}{Theorem}[section]
\newtheorem{defn}[theorem]{Definition}
\newtheorem{lemma}[theorem]{Lemma}
\newtheorem{eple}[theorem]{Example}
\newtheorem{rmk}[theorem]{Remarks}
\newtheorem{dsc}[theorem]{Discussion}
\newtheorem{nota}[theorem]{Notation}
\newsavebox{\indbin}
\savebox{\indbin}{\begin{picture}(0,0)
\newlength{\gnu}
\settowidth{\gnu}{$\smile$} \setlength{\unitlength}{.5\gnu}
\put(-1,-.65){$\smile$} \put(-.25,.1){$|$}
\end{picture}}
\newcommand{\be}{\begin{enumerate}}
\newcommand{\bd}{\begin{defn}}
\newcommand{\bt}{\begin{theorem}}
\newcommand{\bl}{\begin{lemma}}
\newcommand{\ee}{\end{enumerate}}
\newcommand{\ed}{\end{defn}}
\newcommand{\et}{\end{theorem}}
\newcommand{\el}{\end{lemma}}
\begin{document}
\title{Functions Analytic at Infinity and Normality}
\author{Tristram de Piro}
\address{Flat 3, Redesdale House, 85 The Park, Cheltenham, GL50 2RP }
 \email{t.depiro@curvalinea.net}
\thanks{}
\begin{abstract}
This paper explores the notion of analytic at infinity and normality.
\end{abstract}
\maketitle
\begin{defn}
\label{decrease}
A smooth real function $f:\mathcal{R}\rightarrow\mathcal{R}$ is non oscillatory if it is eventually monotone, that is there exist $r\in\mathcal{R}_{>0}$ for which $f|_{(r,\infty)}$ is increasing or decreasing and similarly for $f|_{(-\infty,r)}$. We say that $f$ is of very moderate decrease if there exists a constant $C\in\mathcal{R}_{>0}$ for which $|f(x)|\leq {C\over |x|}$ for $|x|>1$. We say that $f$ is of moderate decrease if there exists a constant $C\in\mathcal{R}_{>0}$ for which $|f(x)|\leq {C\over |x|^{2}}$ for $|x|>1$.\\
\end{defn}
\begin{defn}
\label{analytic}
Given a smooth real function $f:{\mathcal{R}\setminus W}\rightarrow\mathcal{R}$, where $W$ is a bounded closed set of $\mathcal{R}$, we say that $f$ is analytic at infinity, if there exist $\{\epsilon_{1},\epsilon_{2}\}\subset\mathcal{R}_{>0}$ such that, for $0<x<\epsilon_{1}$;\\

$f({1\over x})=a(x)$\\

where $a(x)=\sum_{n=1}^{\infty}a_{n}x^{n}$, $a_{n}\in\mathcal{R}$, $n\geq 1$, is an absolutely convergent real power series on the interval $(0,\epsilon_{1})$, and for $-\epsilon_{2}<x<0$;\\

$f({1\over x})=b(x)$\\

where $b(x)=\sum_{n=1}^{\infty}b_{n}x^{n}$, $b_{n}\in\mathcal{R}$, $n\geq 1$, is an absolutely convergent real power series on the interval $(-\epsilon_{2},0)$.\\

\end{defn}

\begin{lemma}
\label{modulus}
The functions $f_{d}(x)={1\over |x-d|}$ where $d\in\mathcal{R}$ are analytic at infinity.\\

\end{lemma}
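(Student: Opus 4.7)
\bigskip

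\noindent\textbf{Proof plan.} The approach is a direct computation reducing $f_{d}(1/x)$ to a geometric series. The key identity is
\[
f_{d}(1/x) \;=\; \frac{1}{\left|\tfrac{1}{x}-d\right|} \;=\; \frac{|x|}{|1-dx|}.
\]
Choose $\epsilon_{1}=\epsilon_{2}=\frac{1}{|d|+1}$ (or any positive number when $d=0$). For $|x|<\epsilon_{1}$ one has $|dx|<1$, hence $1-dx>0$ and therefore $|1-dx|=1-dx$, so the absolute value in the denominator disappears on each side of $0$.

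\medskip

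\noindent\textbf{Positive side.} For $0<x<\epsilon_{1}$ write $|x|=x$ and apply the geometric series expansion
\[
\frac{1}{1-dx}\;=\;\sum_{n=0}^{\infty} d^{\,n} x^{n},
\]
which converges absolutely since $|dx|<1$. Multiplying by $x$ gives
\[
f_{d}(1/x)\;=\;\frac{x}{1-dx}\;=\;\sum_{n=1}^{\infty} d^{\,n-1} x^{n},
\]
so one takes $a_{n}=d^{\,n-1}$ for $n\geq 1$, and absolute convergence on $(0,\epsilon_{1})$ is inherited from the geometric series.

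\medskip

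\noindent\textbf{Negative side.} For $-\epsilon_{2}<x<0$ one has $|x|=-x$ while $|1-dx|=1-dx$ still holds, so
\[
f_{d}(1/x)\;=\;\frac{-x}{1-dx}\;=\;-\sum_{n=1}^{\infty} d^{\,n-1} x^{n},
\]
which is again absolutely convergent on $(-\epsilon_{2},0)$, with $b_{n}=-d^{\,n-1}$. Comparing with Definition \ref{analytic} completes the proof, after observing that $W=\{d\}$ is a bounded closed set and that the choice $\epsilon_{i}<1/|d|$ (vacuous for $d=0$) keeps $1/x$ away from $d$ on both intervals.

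\medskip

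\noindent\textbf{Obstacles.} There is essentially no obstacle: the only subtlety is the piecewise behaviour introduced by the absolute value $|x-d|$, which one must handle by separating the cases $x>0$ and $x<0$, and by choosing $\epsilon_{1},\epsilon_{2}$ small enough that $1-dx$ retains a fixed sign. Both are handled uniformly by the bound $|dx|<1$.
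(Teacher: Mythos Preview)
Your proof is correct and follows essentially the same route as the paper: both reduce $f_{d}(1/x)=|x|/|1-dx|$ to a geometric series after noting that $|dx|<1$ forces $1-dx>0$, obtaining $a_{n}=d^{\,n-1}$ and $b_{n}=-d^{\,n-1}$. The only cosmetic difference is that the paper carries out the computation for $d>0$ and leaves $d\leq 0$ to the reader, whereas your choice $\epsilon_{1}=\epsilon_{2}=1/(|d|+1)$ handles all $d$ uniformly.
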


\begin{proof}
If $d>0$, we have that for $0<x<{1\over d}$, using the formula for a geometric progression;\\

$f_{d}({1\over x})={1\over |{1\over x}-d|}$\\

$={|x|\over |1-dx|}$\\

$={x\over 1-dx}$\\

$x\sum_{n=0}^{\infty}(dx)^{n}$\\

$=\sum_{n=0}^{\infty}d^{n}x^{n+1}$\\

$=\sum_{n=1}^{\infty}d^{n-1}x^{n}$\\

and, for $-{1\over d}<x<0$\\

$f_{d}({1\over x})={1\over |{1\over x}-d|}$\\

$={|x|\over |1-dx|}$\\

$=-{x\over 1-dx}$\\

$-x\sum_{n=0}^{\infty}(dx)^{n}$\\

$=-\sum_{n=0}^{\infty}d^{n}x^{n+1}$\\

$=-\sum_{n=1}^{\infty}d^{n-1}x^{n}$\\

The facts that $f_{d}$ are smooth, for $x\neq d$, and the cases $d=0$ and $d<0$ are left to the reader.\\

\end{proof}

\begin{lemma}
\label{modulus2}
If $f:{\mathcal{R}\setminus W}\rightarrow\mathcal{R}$ is analytic at infinity, then so is $f'$, moreover $f'$ is of moderate decrease.\\
\end{lemma}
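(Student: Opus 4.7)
The plan is to differentiate the defining identity for $f$ at infinity using the chain rule and then invoke the standard theory of term-by-term differentiation of power series. On $(0,\epsilon_{1})$, differentiating $f(1/x)=a(x)$ in $x$ gives
$$-x^{-2}f'(1/x)=a'(x),\qquad\text{hence}\qquad f'(1/x)=-x^{2}a'(x).$$
Absolute convergence of $a(x)=\sum_{n\geq 1}a_{n}x^{n}$ on $(0,\epsilon_{1})$ forces the radius of convergence of $a$ to be at least $\epsilon_{1}$, and the classical theorem on differentiation of power series then says $a'(x)=\sum_{n\geq 1}na_{n}x^{n-1}$ converges absolutely on $(-\epsilon_{1},\epsilon_{1})$.

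Multiplying by $-x^{2}$ and reindexing with $m=n+1$,
$$f'(1/x)=-\sum_{n=1}^{\infty}na_{n}x^{n+1}=\sum_{m=1}^{\infty}\tilde a_{m}x^{m},$$
with $\tilde a_{1}=0$ and $\tilde a_{m}=-(m-1)a_{m-1}$ for $m\geq 2$. This is an absolutely convergent series on $(0,\epsilon_{1})$ of the form required in Definition~\ref{analytic}. The analogous computation starting from $f(1/x)=b(x)$ on $(-\epsilon_{2},0)$ yields a matching expansion on the negative side, so $f'$ is analytic at infinity.

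For moderate decrease, set $y=1/x$ so that $f'(y)=-y^{-2}a'(1/y)$ for $y>1/\epsilon_{1}$. Since $a'$ is a power series with radius of convergence $\geq\epsilon_{1}$, it is continuous, hence bounded, on $[-\epsilon_{1}/2,\epsilon_{1}/2]$, say by $M$; therefore $|f'(y)|\leq M/y^{2}$ for $y\geq 2/\epsilon_{1}$, and the symmetric estimate holds for $y\leq -2/\epsilon_{2}$ via $b'$. On the remaining region $\{1<|y|\leq\max(2/\epsilon_{1},2/\epsilon_{2})\}\setminus W$ the function $f'$ is continuous on a closed bounded set (its intersection with the domain of $f'$), hence bounded by some $M'$; absorbing the factor $\max(2/\epsilon_{1},2/\epsilon_{2})^{2}$ into the constant gives the bound $|f'(y)|\leq C/|y|^{2}$ uniformly for $|y|>1$ in the domain.

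The main obstacle is cosmetic rather than deep: reindexing the differentiated series so that it genuinely starts at $n=1$ as Definition~\ref{analytic} demands (resolved by taking the first coefficient to be zero), and dealing with the intermediate band $1<|y|\lesssim 1/\epsilon_{i}$ where the asymptotic expansion does not yet provide the bound. One should also tacitly interpret ``moderate decrease'' for a function defined on $\mathcal{R}\setminus W$ as the estimate holding wherever $f'$ is defined with $|y|>1$, since the original definition was phrased for a function on all of $\mathcal{R}$.
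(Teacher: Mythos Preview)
Your argument is correct and mirrors the paper's: differentiate $f(1/x)=a(x)$ via the chain rule, use term-by-term differentiation of the power series, and reindex to get a series for $f'(1/x)$ beginning at order $2$; the paper then cites the proof of Lemma~\ref{modulus3} for moderate decrease, which amounts to exactly your observation that the series $-x^{2}a'(x)$ has a bounded cofactor so that $|f'(y)|\lesssim |y|^{-2}$ for large $|y|$. The only quibble is your compactness step on the intermediate band: the set $\{1<|y|\leq R\}\setminus W$ need not be closed and $f'$ can blow up near $W$, but this is the same looseness present in the paper's own treatment (and you rightly flag the definitional issue in your final paragraph).
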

\begin{proof}
As $f$ is analytic at infinity, we have that, for $0<x<\epsilon_{1}$;\\

$f({1\over x})=\sum_{n=1}a_{n}x^{n}$\\

so that, by the chain rule, and the fact that $\sum_{n=1}^{\infty}na_{n}x^{n-1}$ is absolutely convergent for $0<x<\epsilon_{1}$;\\

$-{f'({1\over x})\over x^{2}}=\sum_{n=1}^{\infty}na_{n}x^{n-1}$\\

so that, rearranging;\\

$f'({1\over x})=-x^{2}\sum_{n=1}^{\infty}na_{n}x^{n-1}$\\

$=-\sum_{n=1}^{\infty}na_{n}x^{n+1}$\\

$=-\sum_{n=2}^{\infty}(n-1)a_{n-1}x^{n}$ $(C)$\\

A similar calculation holds for $-\epsilon_{2}<x<0$, to show that $f$ is analytic at infinity. From $(C)$, using the proof of Lemma \ref{modulus3}, we can show that $f'$ is of moderate decrease. \\
\end{proof}

\begin{lemma}
\label{modulus3}
If $f$ is analytic at infinity, then it is of very moderate decrease and non-oscillatory.\\

\end{lemma}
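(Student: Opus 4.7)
The plan is to exploit the power-series representation $f(1/x)=\sum_{n\geq 1}a_{n}x^{n}$ on $(0,\epsilon_{1})$, together with its mirror $\sum_{n\geq 1}b_{n}x^{n}$ on $(-\epsilon_{2},0)$, directly for the decay estimate; the monotonicity claim I would reduce, via Lemma \ref{modulus2}, to reading off the sign of the analogous expansion of $f'$.

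For very moderate decrease I factor out one power of $x$ and write $f(1/x)=x\sum_{n=0}^{\infty}a_{n+1}x^{n}$. Since $\sum|a_{n}|x^{n}$ converges at the interior point $x=\epsilon_{1}/2$, the shifted sum $M:=\sum_{n\geq 0}|a_{n+1}|(\epsilon_{1}/2)^{n}$ is finite, and for $0<x\leq\epsilon_{1}/2$ this gives $|f(1/x)|\leq Mx$, equivalently $|f(y)|\leq M/|y|$ for $y\geq 2/\epsilon_{1}$. A symmetric calculation with $b$ yields the bound for $y\leq -2/\epsilon_{2}$, and the two combine into a single uniform inequality $|f(y)|\leq C/|y|$ on $\{|y|\geq\max(2/\epsilon_{1},2/\epsilon_{2})\}$, which is the decay required by Definition \ref{decrease}.

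For the non-oscillatory property I would apply Lemma \ref{modulus2} to conclude that $f'$ is itself analytic at infinity, with expansion $f'(1/x)=\sum_{n\geq 1}c_{n}x^{n}$ on $(0,\epsilon_{1})$. If every $c_{n}$ vanishes then $f'\equiv 0$ on $(1/\epsilon_{1},\infty)$, so $f$ is constant and trivially monotone there. Otherwise, letting $k$ be the least index with $c_{k}\neq 0$, I write $f'(1/x)=x^{k}\bigl(c_{k}+\sum_{j\geq 1}c_{k+j}x^{j}\bigr)$; the bracketed factor is continuous at $0$ with value $c_{k}\neq 0$, so it retains that sign throughout some $(0,\delta)$. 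Hence $f'$ is of constant sign on $(1/\delta,\infty)$ and $f$ is monotone there, and the same argument applied to the expansion $b$ controls $(-\infty,-R')$.

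The only delicate point is aligning the natural range $|y|\geq\max(2/\epsilon_{1},2/\epsilon_{2})$ on which the series estimate yields the bound with the literal threshold $|y|>1$ appearing in Definition \ref{decrease}; I read the definition as encoding decay at infinity, which is exactly what the power-series estimate delivers.
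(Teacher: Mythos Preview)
Your proof is correct and follows essentially the same route as the paper: both factor the power series for $f(1/x)$ (and, via Lemma~\ref{modulus2}, for $f'(1/x)$) as $x^{k}$ times a continuous factor nonvanishing at $0$, reading off the decay estimate and the eventual constant sign of $f'$ from this factorisation. On the threshold issue you flag at the end, the paper does not reinterpret Definition~\ref{decrease} but simply bounds $|f|$ on the compact region $1\leq |x|\leq r$ (with $r=\max(1/\epsilon_{1},1/\epsilon_{2})$) by smoothness and absorbs this bound into the constant $C$; that is the one-line fix you would add to match the literal definition.
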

\begin{proof}
By the definition, we have that, for $0<x<\epsilon_{1}$, that;\\

$f({1\over x})=x^{n}u(x)$\\

where, assuming $f$ is non vanishing on $({1\over \epsilon_{1}},\infty))$, $n\geq 1$ and $u(0)\neq 0$, and $|u(x)|\leq M$ on $(0,\epsilon_{1})$, so that;\\

$|f({1\over x})|\leq Mx^{n}$\\

$|f(x)\leq {M\over x^{n}}$\\

for $x>{1\over \epsilon_{1}}$. Similar considerations apply for $x<0$, so that;\\

$|f(x)|\leq {N\over |x|^{m}}$\\

for $x<-{1\over \epsilon_{2}}$, $m\geq 1$, $N\in\mathcal{R}_{>0}$. Without loss of generality, assuming that $max(\epsilon_{1},\epsilon_{2})<1$, taking $D=max(M,N)$, $r=max({1\over \epsilon_{1}},{1\over \epsilon_{2}})$, $p=min(m,n)$, we obtain that;\\

$|f(x)|\leq {D\over |x|^p}\leq {D\over |x|}$\\

for $|x|>r$. As $f$ is smooth, $|f|_{[-r,r]}|\leq M<{Mr\over |x|}$, so $f$ is of very moderate decrease, taking $C=max(D,Mr)$.\\

By Lemma \ref{modulus2}, we have that $f'$ is analytic at infinity. If $f'=0$, the result follows. Otherwise, we clam that the zero set of $f'$, $Zero(f')$ is contained in $[-s,s]$, for some $s\in\mathcal{R}_{>0}$, in which case the result again follows. To see this is the case, suppose there exists a sequence of zeros of $f'$, $\{s_{n}:n\in\mathcal{N}\}$, for which $|s_{n}|\rightarrow \infty$, Without loss of generality we may assume that $s_{n}>0$, so that the  absolutely convergent power series $a(x)$ has infinitely many zeros in the interval $(0,\epsilon)$, for any $0<\epsilon<\epsilon_{1}$. Writing $a(x)=x^{n}u(x)$, with $u(0)\neq 0$, by continuity we can assume that $u(x)\neq 0$, in the interval $(0,\epsilon_{3})$, where $0<\epsilon_{3}<\epsilon_{1}$. Then $x^{n}u(x)=0$ iff $x=0$ or $u(x)=0$, iff $x=0$ or $u(x)=0$, iff $x=0$ or $x\notin (0,\epsilon_{3})$, which is a contradiction.\\

\end{proof}
\begin{rmk}
\label{paper}
The class of non-oscillatory functions, with very moderate decrease, was considered in the paper \cite{dep}, where we proved a Fourier inversion theorem.\\

\end{rmk}

\begin{defn}
\label{normaldef}
Let $f:{\mathcal{R}^{2}\setminus W}\rightarrow \mathcal{R}$ be smooth, with $W$ closed and bounded, we say that $f$ is of very moderate decrease if there exists a constant $C\in\mathcal{R}_{>0}$ such that $|f(x,y)|\leq {C\over |(x,y)|}$, for $|(x,y)|>1$. We say that $f$ is of moderate decrease if there exists a constant $C\in\mathcal{R}_{>0}$ such that $|f(x,y)|\leq {C\over |(x,y)|^{2}}$, for $|(x,y)|>1$. We say that $f$ is of moderate decrease $n$, if there exists a constant $C\in\mathcal{R}_{>0}$ such that $|f(x,y)|\leq {C\over |(x,y)|^{n}}$, for $|(x,y)|>1$, with $n\geq 2$. We say that $f$ is normal, if;\\

$(i)$. For $x\in\mathcal{R}$, $f_{x}(y)$ is analytic at infinity.\\

$(ii)$. For $y\in\mathcal{R}$, $f_{y}(x)$ is analytic at infinity.\\

$(iii)$. $f$ is of very moderate decrease.\\

$(iv)$. The higher derivatives ${\partial f\over \partial x}$ and ${\partial f\over \partial y}$, are of moderate decrease.\\

$(v)$. There is a uniform bound $val$ in the number of zeros of $f_{x},(f_{x})'(f_{x})''$ and $f_{y},(f_{y})',(f_{y})''$.\\

We say that $f$ is quasi normal if $(i)-(iv)$ hold and $(v)'$, for sufficiently large $x$, the zeros of $f_{x}$ are contained in the union of bounded intervals $(-M_{x},-N_{x})\cup (N_{x},M_{x})$, with $M_{x}-N_{x}$ uniformly bounded in $x$, and similarly for $\{(f_{x})'(f_{x})''\}$, with corresponding $M'_{x},M''_{x},N'_{x},N''_{x}$, and $\{f_{y},(f_{y})'(f_{y})''\}$, with corresponding $\{M_{y}, M'_{y},M''_{y},N_{y},N'_{y},N''_{y}\}$. We say that $f$ is quasi split normal if $(i)-(iv)$ hold and $(v)''$, for sufficiently large $(x,y)$, $f(x,y)=f_{1}(x,y)+f_{2}(x,y)$, with $f_{1}$ and $f_{2}$ quasi normal, and $\{f,f_{1},f_{2}\}$ are smooth.\\

\end{defn}
\begin{lemma}
\label{example}
If $\{a,b\}\subset\mathcal{R}$, and $W$ is a closed ball $B((a,b),s)$, $s>0$, containing $(a,b)$, the function $f(x,y)|_{{\mathcal{R}^{2}\setminus W}}$\\

where $f(x,y)={1\over |(x,y)-(a,b)|}$, $(x,y)\neq (a,b)$\\

is normal.\\

If $\rho\geq 0$ is continuous with compact support, $\rho\neq 0$, the function;\\

$f(x,y)=\int_{\mathcal{R}^{2}}{\rho(x',y')\over |(x,y)-(x',y')|}dx'dy'$\\

is quasi normal, and if $\rho$ is smooth with compact support, $\rho\neq 0$;\\

$f(x,y)=\int_{\mathcal{R}^{2}}{\rho(x',y')\over |(x,y)-(x',y')|}dx'dy'$\\

is quasi split normal.\\

\end{lemma}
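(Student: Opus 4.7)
For $f(x,y)=1/|(x,y)-(a,b)|$ I would verify $(i)$--$(v)$ directly. Fix $x$ and substitute $y=1/t$; a short calculation yields, for small $t>0$,
\[
f_{x}\!\left(\tfrac{1}{t}\right)=\frac{t}{\sqrt{1-2bt+((x-a)^{2}+b^{2})t^{2}}}.
\]
The binomial series for $(1-u)^{-1/2}$ then gives an absolutely convergent expansion $\sum_{n\geq 1}a_{n}t^{n}$, establishing $(i)$; the $t<0$ branch and $(ii)$ are symmetric. Conditions $(iii)$ and $(iv)$ are immediate from $|(x,y)-(a,b)|\geq |(x,y)|-|(a,b)|$ for $|(x,y)|$ large, together with explicit formulas for $\partial_{x}f$ and $\partial_{y}f$ whose magnitudes are bounded by $|(x,y)-(a,b)|^{-2}$. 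For $(v)$: $f_{x}>0$ has no zeros, $(f_{x})'(y)=-(y-b)/((x-a)^{2}+(y-b)^{2})^{3/2}$ has the single zero $y=b$, and the numerator of $(f_{x})''(y)$ is $2(y-b)^{2}-(x-a)^{2}$, giving at most two zeros; hence $val=2$ suffices.

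\textbf{Continuous $\rho$: quasi normality.} With $\rho\geq 0$ supported in $\ov{B(0,R)}$, for $|(x,y)|>2R$ the kernel $1/|(x,y)-(x',y')|$ and all its partials in $(x,y)$ are uniformly bounded on $\mbox{supp}(\rho)$, so differentiation under the integral is valid and $f$ is smooth outside $\ov{B(0,2R)}$. For $(i)$--$(ii)$ I apply the pointwise expansion from the first case: the kernel at $y=1/t$ is an absolutely convergent series $\sum c_{n}(x,x',y')t^{n}$ with coefficients continuous in $(x',y')$, and termwise integration against $\rho$ (justified by uniform convergence on compact subsets of $(0,\epsilon_{1})$) yields the required series for $f_{x}(1/t)$. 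Conditions $(iii)$--$(iv)$ reduce to bounding the kernel outside the integral. The core step is $(v)'$: since $\rho\geq 0$ and $\rho\neq 0$, the function $f_{x}$ is strictly positive, hence has empty zero set. Differentiating under the integral and expanding asymptotically yields
\[
(f_{x})''(y)=\frac{(2y^{2}-x^{2})\int\rho}{(x^{2}+y^{2})^{5/2}}+O\!\left(\frac{1}{(x^{2}+y^{2})^{2}}\right),
\]
and a transversality argument at the leading zero locations $y=\pm|x|/\sqrt{2}$ (where the leading derivative in $y$ is of order $|x|^{-4}$, matching the error) confines the zeros to intervals of uniformly bounded width about $\pm|x|/\sqrt{2}$. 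A simpler estimate for $(f_{x})'$ places its unique zero near the $y$-centroid of $\rho$; the $(f_{y})$-statements are symmetric.

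\textbf{Smooth $\rho$: quasi split normality.} For smooth $\rho$, conditions $(i)$--$(iv)$ are inherited from the continuous case, and $f$ is smooth on all of $\mathcal{R}^{2}$ because $\partial^{\alpha}(\rho*K)=(\partial^{\alpha}\rho)*K$ for the locally integrable kernel $K(z)=1/|z|$. To produce the splitting, fix a smooth partition of unity $\chi_{+}+\chi_{-}=1$ on $\mathcal{R}^{2}$ chosen so that both $\chi_{\pm}\rho$ are nonzero smooth nonnegative functions of compact support, and set $f_{\pm}(x,y)=\int\chi_{\pm}(x',y')\rho(x',y')/|(x,y)-(x',y')|\,dx'dy'$. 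Then $f=f_{+}+f_{-}$ identically, each $f_{\pm}$ is quasi normal by the preceding case, and all three functions are smooth. The main obstacle throughout is the zero analysis in $(v)'$: quantifying the passage from the asymptotic expansion to rigorous intervals of uniformly bounded width containing the zeros of $(f_{x})''$ requires balancing the transversality of the leading term against the size of the correction and ruling out spurious zeros from higher-order contributions.
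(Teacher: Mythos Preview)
Your treatment of the point potential and of $(i)$--$(iv)$ for the convolution matches the paper. The two places where you diverge are $(v)'$ and the splitting for smooth $\rho$.

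For $(v)'$ you set up an asymptotic expansion and a transversality argument, and you rightly note that the slope of the leading term at its zeros and the size of the remainder are both of order $|(x,y)|^{-4}$, so closing the argument would need careful constant-tracking that you do not carry out. The paper bypasses this entirely with a sign argument exploiting $\rho\geq 0$ directly: from
\[
\frac{\partial^{2}f}{\partial x^{2}}(x,y)=\int \rho(x',y')\,\frac{2(x-x')^{2}-(y-y')^{2}}{|(x,y)-(x',y')|^{5}}\,dx'dy',
\]
the integrand is everywhere nonnegative once $|x-x'|>|y-y'|/\sqrt{2}$ for every $(x',y')$ in the support, and everywhere nonpositive under the reverse inequality. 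With $\mathrm{supp}(\rho)\subset B(0,M)$ this forces $(f_{y})''>0$ for $|x|>M+(|y|+M)/\sqrt{2}$ and $(f_{y})''<0$ for $|x|<-M+(|y|-M)/\sqrt{2}$, so the zeros sit in two intervals of total width $(2+\sqrt{2})M$, uniform in $y$. The first-derivative case is the same idea with the simpler kernel factor $-(x-x')$, confining the zero of $(f_{y})'$ to $(-M,M)$. This is the device you are missing: nonnegativity of $\rho$ converts the zero question into a pointwise sign condition on the kernel over the support, with no asymptotics or transversality needed.

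Your splitting in the smooth case does not do what is required. The third clause drops the hypothesis $\rho\geq 0$, so for signed $\rho$ no partition of unity $\chi_{+}+\chi_{-}=1$ with $\chi_{\pm}\geq 0$ can make both $\chi_{\pm}\rho$ nonnegative. The paper instead takes $\rho=\rho^{+}+\rho^{-}$ with $\rho^{+}=\max(\rho,0)\geq 0$ and $\rho^{-}=\min(\rho,0)\leq 0$; these are merely continuous, but the corresponding $f_{1},f_{2}$ are smooth outside $\mathrm{supp}(\rho)$ and each is quasi normal by the second clause applied to $\rho^{+}$ and to $-\rho^{-}$, which is exactly what $(v)''$ demands for large $(x,y)$. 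When $\rho$ happens to have one sign, the trivial decomposition $f=f/2+f/2$ already suffices.
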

\begin{proof}
Fix $x_{0}\in\mathcal{R}$, then;\\

$f_{x_{0}}(y)={1\over ((x_{0}-a)^{2}+(y-b)^{2})^{1\over 2}}$\\

Without loss of generality, assuming that $x_{0}\neq a$, we have that, for $y>0$;\\

 $f_{x_{0}}({1\over y})={1\over ((x_{0}-a)^{2}+({1\over y}-b)^{2})^{1\over 2}}$\\

 $={y\over (y^{2}(x_{0}-a)^{2}+1-2yb+y^{2}b^{2})^{1\over 2}}$\\

For $y<1$, we have that $y^{2}<y$, so that;\\

$|y^{2}(x_{0}-a)^{2}-2yb+y^{2}b^{2}|<y|(x_{0}-a)^{2}+2|b|+b^{2}|<1$\\

iff $y<{1\over |(x_{0}-a)^{2}+2|b|+b^{2}|}$\\

and, applying Newton's theorem, with $b_{n}={(-1)^{n}(2n)!\over 2^{2n}(n!)^{2}}$;\\

$f_{x_{0}}({1\over y})=y\sum_{n=0}^{\infty}b_{n}(y^{2}(x_{0}-a)^{2}-2yb+y^{2}b^{2})^{n}$\\

$=\sum_{n=1}^{\infty}a_{n}y^{n}$\\

is an absolutely convergent power series in $y$ of order $1$. A similar result holds for $y<0$, so that $f_{x_{0}}$ is analytic at infinity. Similarly, fixing $y_{0}\in\mathcal{R}$, we can see that $f_{y_{0}}(x)$ is analytic at infinity. Let $\overline{r}=(x,y)$, $r=|x,y|$, $\overline{a}=(a,b)$, then, switching to polars, for $0\leq \theta<2\pi$;\\

$lim_{r\rightarrow\infty} r f(r,\theta)=lim_{r\rightarrow \infty}{r\over |\overline{r}(r,\theta)-\overline{a}|}$\\

$=lim_{r\rightarrow\infty}{1\over |{\overline{r}(r,\theta)\over r}-{\overline{a}\over r}|}$\\

$=lim_{r\rightarrow\infty}{1\over |\hat{\overline{r}}(r,\theta)|}$\\

$=1$\\

so that fixing a closed ball $B(\overline{0},s)\supset W$, using the fact that $f$ is smooth on $B(\overline{0},s)^{c}$, $r|f|\leq D$, where $D\in\mathcal{R}_{>0}$, on $B(\overline{0},s)^{c}$, so that $|f|\leq {D\over r}$, for $r>s$. As $f$ is continuous on ${B(\overline{0},s)\setminus W^{\circ}}$, it is bounded, by compactness of ${B(\overline{0},s)\setminus W^{\circ}}$, so that $|f|\leq M$ for $|x|\geq 1$. It follows that $|f|\leq {C\over r}$, for $|x|\geq 1$, where $C=max(D,Ms)$. Therefore, $f|_{{\mathcal{R}^{2}\setminus W}}$ is of very moderate decrease.\\

We have that $f_{x}$ has no zeros, similarly for $f_{y}$, and, by the chain rule;\\

${\partial f\over \partial x}=-{1\over 2}2(x-a){1\over ((x-a)^{2}+(y-b)^{2})^{3\over 2}}$\\

$=-{x-a\over ((x-a)^{2}+(y-b)^{2})^{3\over 2}}$\\

so that $(f_{y})'$ has a zero when $x=a$, similarly $(f_{x})'$ has a zero when $y=b$. We have;\\

$lim_{r\rightarrow\infty}r^{2}|{\partial f\over \partial x}|=lim_{r\rightarrow\infty}{(x-a)r^{2}\over |\overline{r}-\overline{a}|^{3}}$\\

$\leq lim_{r\rightarrow\infty}{r^{2}|\overline{r}-\overline{a}|\over |\overline{r}-\overline{a}|^{3}}$\\

$=lim_{r\rightarrow\infty}{r^{2}\over |\overline{r}-\overline{a}|^{2}}$\\

$=lim_{r\rightarrow\infty}{1\over |{\overline{r}\over r}-{\overline{a}\over r}|^{2}}$\\

$=lim_{r\rightarrow\infty}{1\over |\hat{\overline{r}}|^{2}}$\\

$=1$\\

so that fixing a closed ball $B(\overline{0},s)\supset W$, using the fact that ${\partial f\over \partial x}$ is smooth on $B(\overline{0},s)^{c}$, $r^{2}|{\partial f\over \partial x}|\leq D$, where $D\in\mathcal{R}_{>0}$, on $B(\overline{0},s)^{c}$, so that $|{\partial f\over \partial x}|\leq {D\over r^{2}}$, for $r>s$. As ${\partial f\over \partial x}$ is continuous on ${B(\overline{0},s)\setminus W^{\circ}}$, it is bounded, by compactness of ${B(\overline{0},s)\setminus W^{\circ}}$ again, so that $|{\partial f\over \partial x}|\leq M$ for $|x|\geq 1$. It follows that $|{\partial f\over \partial x}|\leq {C\over |\overline{x}|^{2}}$, for $|\overline{x}|\geq 1$, where $C=max(D,Ms^{2})$. A similar proof works for ${\partial f\over \partial y}$, and the higher derivatives ${\partial f^{n+m}\over \partial x^{m}\partial y^{n}}$, $n+m\geq 1$, the details are left to the reader, so that $\{{\partial^{n+m} f\over \partial x^{n}\partial y^{n}}_{{\mathcal{R}^{2}\setminus W}}:n+m\geq 1\}$ are of moderate decrease. We have that, by the product rule;\\

${\partial^{2} f\over \partial x^{2}}=-{1\over ((x-a)^{2}+(y-b)^{2})^{3\over 2}}+{3(x-a)^{2}\over ((x-a)^{2}+(y-b)^{2})^{5\over 2}}=0$\\

iff $3(x-a)^{2}-[(x-a)^{2}+(y-b)^{2}]=0$\\

iff $2(x-a)^{2}=(y-b)^{2}$\\

iff $x-a={1\over \sqrt{2}}(y-b)$ or $x-a=-{1\over \sqrt{2}}(y-b)$\\

so that $(f_{y})''$ has at most $2$ zeros for $y\in\mathcal{R}$ and we can take $val=2$. A similar result holds for $(f_{x})''$.\\

For the second claim, note that if $(x,y)\in Supp(\rho)$, then;\\

$f(x,y)=\int_{\mathcal{R}^{2}}{\rho(x',y')\over |(x,y)-(x',y')|}dx'dy'$\\

$=\int_{\mathcal{R}^{2}}{\rho(x-x',y-y')\over |(x',y')|}dx'dy'$\\

$=\int_{0,2\pi}\int_{\mathcal{R}_{>0}}{\rho_{x,y}(r,\theta)\over r}rsin(\theta)drd\theta$\\

$=\int_{0,2\pi}\int_{\mathcal{R}_{>0}}\rho_{x,y}(r,\theta)sin(\theta)drd\theta$\\

so that;\\

$|f(x,y)|\leq \int_{0,2\pi}\int_{\mathcal{R}_{>0}}|\rho_{x,y}(r,\theta)|drd\theta$\\

$\leq 2 M\pi R(x,y)$\\

where $Supp_{x,y}(\rho)\subset B(\overline{0},R(x,y))$, $|\rho|\leq M$, so that $f$ is defined everywhere. If $\rho$ is smooth, we have that $f$ is smooth, as;\\

${\partial^{i+j}f\over \partial x^{i}\partial y^{j}}(x,y)={\partial^{i+j} \int_{\mathcal{R}^{2}}{\rho(x-x',y-y')\over |(x',y')|}dx'dy'\over \partial x^{i}\partial y^{j}}$\\

$=\int_{\mathcal{R}^{2}}{{\partial^{i+j}\rho\over \partial x^{i}\partial y^{j}}(x-x',y-y')\over |(x',y')|}dx'dy'$\\

with ${\partial^{i+j}\rho\over \partial x^{i}\partial y^{j}}$ having compact support again. If $\rho$ is continuous, but not necessarily smooth, we have that, for $(x,y)\notin Supp(\rho)$;\\

${\partial^{i+j}f\over \partial x^{i}\partial y^{j}}(x,y)=\int_{\mathcal{R}^{2}}\rho(x',y'){\partial^{i+j}\over \partial x^{i}\partial y^{j}}({1\over |(x,y)-(x',y')|})dx'dy'$\\

so that $f$ is smooth, outside $Supp(\rho)$. For $(i)$, in Definition \ref{normaldef}, we have, for $x_{0}\in\mathcal{R}$, $y>0$;\\

$f_{x_{0}}({1\over y})=\int_{\mathcal{R}^{2}}{\rho(x',y')\over |(x_{0},{1\over y})-(x',y')|}dx'dy'$\\

$=y\int_{\mathcal{R}^{2}}{\rho(x',y')\over (1+y^{2}(x_{0}-x')^{2}-2yy'+y^{2}y'^{2})^{1\over 2}}$\\

so that, with $y<1$, $y^{2}<y$, letting;\\

$M_{x_{0}}=sup_{(x',y')\in Supp(\rho)}|(x_{0}-x'|^{2}+2|y'|+y'^{2}|$, if $y<{1\over M_{x_{0}}}$, then;\\

$|y^{2}(x_{0}-x')^{2}-2yy'+y^{2}y'^{2}|<y|(x_{0}-x')^{2}+2|y'|+y'^{2}|<1$\\

 so that, we can apply Newton's theorem uniformly in $(x',y')\in Supp(\rho)$, to obtain;\\

$f_{x_{0}}({1\over y})=y\int_{Supp(\rho)}\rho(x',y')(\sum_{n=0}^{\infty}b_{n}(y^{2}(x_{0}-x')^{2}-2yy'+y^{2}y'^{2})^{n})dx'dy'$\\

where $b_{n}$ is as above. With $y<1$ again, $|x'|\leq M$, $|y'|\leq M$, for $(x',y')\subset Supp(\rho)$, $|\rho|\leq N$, $y<{1\over (|x_{0}|+M)^{2}+2M+M^{2}}$, we have, applying the DCT;\\

$|f_{x_{0}}({1\over y})|\leq yN(2M)^{2}\sum_{n=0}^{\infty}|b_{n}|((|x_{0}|+M)^{2}y+2yM+yM^{2})^{n}$\\

$\leq 4yNM^{2}\sum_{n=0}^{\infty}|b_{n}|y^{n}((|x_{0}|+M)^{2}+2M+M^{2})^{n}$\\

$\leq 4yNM^{2}\sum_{n=0}^{\infty}y^{n}((|x_{0}|+M)^{2}+2M+M^{2})^{n}$\\

defines an absolutely convergent series. A similar proof works for $y<0$. $(ii)$ is similar. For $(iii)$, with $Supp(\rho)\subset B(\overline{0},M)$, $|\rho|\leq N$, if $|(x,y)|\geq 2M$, and $(x',y')\in Supp(\rho)$, $|(x,y)-(x',y')|\geq {|(x,y)\over 2}$, so that ${1\over |(x,y)-(x',y')|}\leq {2\over |(x,y)|}$, and;\\

$|f(x,y)|= |\int_{Supp(\rho)}{\rho(x',y')\over |(x,y)-(x',y')|}dx'dy'|$\\

$\leq {2\over |(x,y)|}\int_{Supp(\rho)}|\rho(x',y')|dx'dy'$\\

$\leq {2\pi M^{2}N\over |(x,y)|}$\\

For $(iv)$, we have that, if $|(x,y)|\geq max(2M,1)$, and $(x',y')\in Supp(\rho)$, ${1\over |(x,y)-(x',y')|^{3}}\leq {8\over |(x,y)|^{3}}$, $|x-x'|\leq |(x,y)-(x',y')|\leq |(x,y)|+M$;\\

$|{\partial f\over \partial x}|=|\int_{Supp(\rho)}{-\rho(x',y')(x-x')\over |(x,y)-(x',y')|^{3}}dx'dy'|$\\

$\leq {8(|(x,y)|+M)\over |(x,y)|^{3}}\int_{Supp(\rho)}|\rho(x',y')|dx'dy'$\\

$\leq {8\pi M^{2}N(|(x,y)|+M)\over |(x,y)|^{3}}$\\

$\leq {8\pi (M^{2}+M)N \over |(x,y)|^{2}}$\\

The proof for ${\partial f\over \partial y}$ is similar. For $(v)'$, we have that if $\rho\geq 0$ is continuous with compact support, $\rho\neq 0$, that $f>0$. For $(x,y)\notin Supp(\rho)$, we have that;\\

${\partial f\over \partial x}=\int_{\mathcal{R}^{2}}-{\rho(x',y')(x-x')\over |(x,y)-(x',y')|^{3}}dx'dy'$\\

and for $x>M$, $(x-x')>0$, for $x<M$, $(x-x')<0$, where $x'\in Supp(\rho)$, so that for $|y|>M$, $(f_{y})'<0$, for $x>M$, and $(f_{y})'>0$, for $x<M$, as $\rho\geq 0$. In particular, the zeros of $f_{y}$ are contained in the interval $(-(M+1),(M+1))$, with the length $2(M+1)$ of the interval, uniformly bounded in $y$. We have that;\\

${\partial^{2} f\over \partial x^{2}}=\int_{\mathcal{R}^{2}}\rho(x',y')[-{1\over |(x,y)-(x',y')|^{3}}+{3(x-x')^{2}\over |(x,y)-(x',y')|^{5}}]dx'dy'$\\

$=\int_{\mathcal{R}^{2}}\rho(x',y')[{2(x-x')^{2}-(y-y')^{2}\over |(x,y)-(x',y')|^{5}}]dx'dy'$\\

so that if $2(x-x')^{2}-(y-y')^{2}>0$, for $(x'y')\in Supp(\rho)$, $(f_{y})''>0$, and if $2(x-x')^{2}-(y-y')^{2}<0$, for $(x'y')\in Supp(\rho)$, $(f_{y})''<0$. We have that;\\

$2(x-x')^{2}-(y-y')^{2}>0$\\

iff $|x-x'|>{1\over \sqrt{2}}|y-y'|$ $(i)$\\

$2(x-x')^{2}-(y-y')^{2}<0$\\

iff $|x-x'|<{1\over \sqrt{2}}|y-y'|$ $(ii)$\\

so that if $|x|>M+{1\over \sqrt{2}}(|y|+M)$, $(i)$ holds, and if $|x|<-M+{1\over \sqrt{2}}(|y|-M)$, $(ii)$ holds, for $|y|> 2M$. In particularly, the zeros of $(f_{y})''$ are contained in the intervals $(-M+{1\over \sqrt{2}}(|y|-M),M+{1\over \sqrt{2}}(|y|+M))\cup (-M-{1\over \sqrt{2}}(|y|+M),M-{1\over \sqrt{2}}(|y|-M))$, with the length of the intervals, $(2+\sqrt{2})M$ uniform in $|y|>2M$. For $(v)''$, we can split $\rho$ into $\rho^{+}$ and $\rho^{-}$ which are continuous with compact support, and use the previous result, noting that quasi normal implies quasi split normal, as $f={f\over 2}+{f\over 2}$.\\

\end{proof}

\begin{rmk}
\label{uniformbounds}
The above functions have stronger properties, for example it can probably be shown that ${\partial f^{n+m}\over \partial x^{m}\partial y^{n}}$ has moderate decrease $n+m+1$, for $n+m\geq 1$, and there is a uniform bound on the zeros of all the higher derivatives $f_{x}^{(n)}$ and $f_{y}^{(m)}$, for all $n\geq 1$, $m\geq 1$. The details are left as an exercise. We include the case of $(f_{y})'''$ for the first function;\\

${\partial^{3} f\over \partial x^{3}}={9(x-a)\over ((x-a)^{2}+(y-b)^{2})^{5\over 2}}-{15(x-a)^{3}\over ((x-a)^{2}+(y-b)^{2})^{7\over 2}}=0$\\

iff $9(x-a)[(x-a)^{2}+(y-b)^{2}]-15(x-a)^{3}=0$\\

iff $9[(x-a)^{2}+(y-b)^{2}]-15(x-a)^{2}=0$\\

iff $6(x-a)^{2}=9(y-b)^{2}$\\

iff $x-a=\sqrt{3\over 2}(y-b)$ or $x=-\sqrt{3\over 2}(y-b)$\\

so that $(f_{y})'''$ has at most $2$ zeros for $y\in\mathcal{R}$ and we can take $val=2$. A similar result holds for $(f_{x})'''$.\\
\end{rmk}

\begin{lemma}
\label{normal}
Let $f:\mathcal{R}^{2}\rightarrow\mathcal{R}$ be smooth and normal, then, for $\{x,y\}\subset\mathcal{R}$, $k_{1}\neq 0$, $k_{2}\neq 0$;\\

$F(k_{1},y)=lim_{r\rightarrow\infty}\int_{-r}^{r}f(x,y)e^{-ik_{1}x}dx$\\

$G(x,k_{2})=lim_{r\rightarrow\infty}\int_{-r}^{r}f(x,y)e^{-ik_{2}y}dy$\\

both exist and $F(k_{1},y)$, $G(x,k_{2})$ are of moderate decrease.\\

\end{lemma}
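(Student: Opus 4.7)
The plan is iterated integration by parts in the $x$-variable for $F$, with a symmetric treatment for $G$. Fix $y \in \mathcal{R}$ and $k_1 \neq 0$. By condition $(ii)$ of Definition \ref{normaldef}, $f_y(x) := f(x,y)$ is analytic at infinity in $x$. Iterating Lemma \ref{modulus2} gives that $(f_y)'$ and $(f_y)''$ are likewise analytic at infinity, and moreover each is of (one-variable) moderate decrease; combined with smoothness on compacta this makes both absolutely integrable on $\mathcal{R}$. By Lemma \ref{modulus3}, $f_y$ and $(f_y)'$ are of very moderate decrease and hence vanish at $\pm\infty$.

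With these ingredients, integration by parts on the symmetric interval $[-r, r]$ yields
\[
\int_{-r}^r f(x,y)e^{-ik_1 x}\,dx = \left[\frac{f(x,y)e^{-ik_1 x}}{-ik_1}\right]_{-r}^r + \frac{1}{ik_1}\int_{-r}^r \partial_x f(x,y)e^{-ik_1 x}\,dx.
\]
The boundary contribution tends to $0$ as $r \to \infty$ by the decay of $f_y$, and the right-hand integral converges absolutely on $\mathcal{R}$; this simultaneously establishes existence of $F(k_1,y)$. Repeating the integration by parts, with the new boundary terms vanishing by decay of $(f_y)'$ and the remaining integral converging by $L^1$-integrability of $(f_y)''$, yields
\[
F(k_1, y) = -\frac{1}{k_1^2}\int_{-\infty}^{\infty} \partial_x^2 f(x,y)\, e^{-ik_1 x}\, dx,
\]
so that $|F(k_1, y)| \leq C(y)/k_1^2$ with $C(y) = \|\partial_x^2 f(\cdot,y)\|_{L^1(\mathcal{R})}$, which is the required moderate decrease in $k_1$. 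The same two-step argument in the $y$-variable, using condition $(i)$, handles $G(x, k_2)$.

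The main technical obstacle is the careful chaining of Lemmas \ref{modulus2} and \ref{modulus3}: one must propagate the analytic-at-infinity property from $f_y$ to $(f_y)'$ to $(f_y)''$, and at each stage extract both the decay at $\pm\infty$ (to kill the boundary terms) and the $L^1$-integrability (to commute the limit with integration). A secondary subtlety is the intended meaning of \emph{moderate decrease} for the two-variable output $F(k_1,y)$: if a genuine two-variable bound $|F(k_1,y)| \leq C/|(k_1,y)|^2$ is meant rather than moderate decrease in $k_1$ at each fixed $y$, then one additionally invokes the two-variable moderate decrease of $\partial_x^2 f$ from condition $(iv)$ to conclude $\|\partial_x^2 f(\cdot,y)\|_{L^1(\mathcal{R})} \leq C'/|y|$ for $|y| \geq 1$; combining this with the $1/k_1^2$ prefactor gives a mixed bound of the form $C''/(k_1^2|y|)$, and reconciling it with $C/|(k_1,y)|^2$ in the regime $|k_1| \ll |y|$ would require an additional case split or a third integration by parts using the fact that all further derivatives of $f_y$ remain analytic at infinity.
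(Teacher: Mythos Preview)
Your argument establishes existence of $F(k_1,y)$ and the bound $|F(k_1,y)|\leq \|(\partial_x^2 f)(\cdot,y)\|_{L^1}/k_1^2$, but this is not what the lemma asserts. ``Moderate decrease'' here is in the \emph{spatial} variable $y$ (respectively $x$ for $G$): one needs $|F(k_1,y)|\leq D/|y|^2$ for large $|y|$, since Lemma \ref{normal1} immediately integrates $F(k_1,y)e^{-ik_2 y}$ over $y\in\mathcal{R}$. Your primary reading (decay in $k_1$) is therefore off target, and your fallback does not close the gap: condition $(iv)$ of Definition \ref{normaldef} only gives two-variable moderate decrease for $\partial_x f$ and $\partial_y f$, not for $\partial_x^2 f$, so you cannot invoke it after two integrations by parts; and even after one integration by parts, the $L^1$ bound $\|(\partial_x f)(\cdot,y)\|_{L^1}\leq \int_{\mathcal{R}} C/(x^2+y^2)\,dx = C\pi/|y|$ yields only $1/|y|$, one power short.

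What you are missing is condition $(v)$, the uniform bound $val$ on the number of zeros of $(f_y)'$ and $(f_y)''$. The paper integrates by parts once to reach $F(k_1,y)=-\tfrac{i}{k_1}\int_{\mathcal{R}}\partial_x f(x,y)e^{-ik_1 x}\,dx$, then uses $(v)$ to split $\mathcal{R}$ into at most $val+1$ intervals on which $(\partial_x f)(\cdot,y)$ is monotone of fixed sign. On each such interval an alternating-cancellation estimate for the $\cos$/$\sin$ components gives a contribution bounded by $\tfrac{4\pi}{|k_1|}\|(\partial_x f)(\cdot,y)\|_\infty$, so that
\[
|F(k_1,y)|\ \leq\ \frac{4\pi(val+1)}{|k_1|^2}\,\|(\partial_x f)(\cdot,y)\|_\infty\ \leq\ \frac{4\pi(val+1)C}{|k_1|^2\,|y|^2},
\]
the last step by the two-variable moderate decrease of $\partial_x f$. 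The point is that the oscillatory estimate trades the $L^1$ norm of $(\partial_x f)(\cdot,y)$ for its \emph{sup} norm, and the sup norm inherits the full $1/|y|^2$ decay from condition $(iv)$, whereas the $L^1$ norm loses a power. No amount of further integration by parts recovers this without either assuming decay of higher $x$-derivatives (which $(iv)$ does not provide) or invoking $(v)$.
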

\begin{proof}
The first claim follows from \cite{dep} together with Lemma \ref{modulus3} and $(i),(ii)$ in Definition \ref{normaldef}. In fact, the first integral is indefinite, in the sense that we could define it as;\\

$F(k_{1},y)=lim_{r\rightarrow\infty, s\rightarrow\infty}(\int_{a}^{r}f(x,y)e^{-ik_{1}x}dx+\int_{-s}^{a}f(x,y)e^{-ik_{1}x}dx)$\\

for a choice of $a\in\mathcal{R}$, and similarly for $G(x,k_{2})$. We then have, using integration by parts, for $y_{0}\in\mathcal{R}$;\\

$\int_{-r}^{r}f(x,y_{0})e^{-ik_{1}x}dx=[{if(x,y_{0})e^{-ik_{1}x}\over k_{1}}]_{-r}^{r}-\int_{-r}^{r}{i{\partial f\over \partial x}(x,y_{0})e^{-ik_{1}x}\over k_{1}}dx$\\

so that, as $f_{y_{0}}$ is of very moderate decrease, by $(iii)$, integrating by parts;\\

$F(k_{1},y_{0})=lim_{r\rightarrow\infty}\int_{-r}^{r}f(x,y_{0})e^{-ik_{1}x}dx$\\

$=lim_{r\rightarrow\infty}([{if(x,y_{0})e^{-ik_{1}x}\over k_{1}}]_{-r}^{r}-\int_{-r}^{r}{i{\partial f\over \partial x}(x,y_{0})e^{-ik_{1}x}\over k_{1}}dx)$\\

$=lim_{r\rightarrow\infty}(-\int_{-r}^{r}{i{\partial f\over \partial x}(x,y_{0})e^{-ik_{1}x}\over k_{1}}dx)$\\

$=-{i\over k_{1}}\int_{-\infty}^{\infty}{\partial f\over \partial x}(x,y_{0})e^{-ik_{1}x}dx$\\

the last integral being definite, as $({\partial f\over \partial x})_{y_{0}}$ is of moderate decrease, using $(iv)$. It follows that $F(k_{1},y)$ is smooth, as differentiating under the integral sign is justified by the DCT, MVT and $(iv)$, with;\\

$F^{(n)}(k_{1},y)=-{i\over k_{1}}\int_{-\infty}^{\infty}{\partial^{1+n}f\over \partial x\partial y^{n}}(x,y)e^{-ik_{1}x}dx$\\

We have that $({\partial f\over \partial x})_{y_{0}}$ is analytic at infinity by $(ii)$ and Lemma \ref{modulus2}. By $(v)$ in the definition of normality, we can assume that for sufficiently large $y$, $({\partial f\over \partial x})_{y}$ is monotone and positive/negative in the intervals $(-\infty,a_{1}(y)),\ldots,(a_{val}(y),\infty)$, where $a_{1},\ldots,a_{val}$ vary continuously with $y$. Splitting the integral into $cos(k_{1}x)$ and $sin(k_{1}x)$ components, in a similar calculation to \cite{dep2}, for any interval of length at least ${2\pi\over |k_{1}|}$, we obtain an alternating cancellation in the contribution to the integral of at most ${2\pi ||f_{y}'||\over |k_{1}|}$ and for any interval of length at most ${2\pi\over |k_{1}|}$, we obtain a contribution to the integral of at most ${4\pi ||f_{y}'||\over |k_{1}|}$. It follows that, for sufficiently large $y$, using the fact that ${\partial f\over \partial x}$ has moderate decrease;\\

$|-{i\over k_{1}}\int_{-\infty}^{\infty}{\partial f\over \partial x}(x,y)e^{-ik_{1}x}dx|$\\

$\leq {(val+1)\over |k_{1}|}{4\pi ||f_{y}'||\over |k_{1}|}$\\

$={4\pi(val+1)||f_{y}'||\over |k_{1}|^{2}}$\\

$\leq {4\pi(val+1)C\over |k_{1}|^{2}|y|^{2}}$ $(*)$\\

so that;\\

$|F(k_{1},y)|\leq {D\over |y|^{2}}$\\

for sufficiently large $y$, with $D={4\pi(val+1)C\over |k_{1}|^{2}}$. As $F(k_{1},y)$ is smooth, we obtain that $F(k_{1},y)$ is of moderate decrease. Similarly, we can show that $G(x,k_{2})$ is of moderate decrease, for $k_{2}\neq 0$.
\end{proof}
\begin{lemma}
\label{quasinormal}
The same result as Lemma \ref{normal} holds, with the assumption that $f$ is smooth and quasi normal or smooth and quasi split normal.

\end{lemma}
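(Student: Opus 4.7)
The plan is to adapt the proof of Lemma \ref{normal} to the weaker hypotheses $(v)'$ and $(v)''$. Since conditions $(i)$--$(iv)$ remain available, the integration by parts identity
$$F(k_1,y_0)=-\frac{i}{k_1}\int_{-\infty}^{\infty}\frac{\partial f}{\partial x}(x,y_0)e^{-ik_1 x}\,dx$$
and the smoothness of $F(k_1,\cdot)$ and $G(\cdot,k_2)$ via differentiation under the integral transfer verbatim. Only moderate decrease in the spatial variable requires new work, so I would treat $F(k_1,y)$ in turn under each of the two hypotheses; the $G(x,k_2)$ case is entirely symmetric.

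For the quasi normal case, I would fix $|y|$ large enough that $(v)'$ applies simultaneously to $(f_y)'$ and $(f_y)''$. The ``bad set''
$$B_y=(-M'_y,-N'_y)\cup(N'_y,M'_y)\cup(-M''_y,-N''_y)\cup(N''_y,M''_y)$$
has total length at most $2(M'_y-N'_y)+2(M''_y-N''_y)\leq 2K'+2K''$ uniformly in $y$, and its complement in $\mathcal{R}$ is a union of at most five ``good'' open intervals on each of which $(f_y)'$ is simultaneously single-signed (because the zeros of $(f_y)'$ sit in $B_y$) and monotone (because the zeros of $(f_y)''$ sit in $B_y$). On each good interval I would invoke the alternating cancellation argument of \cite{dep2} used in Lemma \ref{normal}, contributing at most $4\pi\|(f_y)'\|_\infty/|k_1|$, and on $B_y$ I would use the trivial length times sup bound, contributing $(2K'+2K'')\|(f_y)'\|_\infty$. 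Condition $(iv)$ gives $\|(f_y)'\|_\infty=\sup_x|\partial f/\partial x(x,y)|\leq C/y^2$ for $|y|\geq 1$, so assembling the pieces and pulling out the prefactor $1/|k_1|$ from integration by parts yields $|F(k_1,y)|\leq D(k_1)/y^2$ for large $|y|$; smoothness of $F$ covers the compact range of $y$ to give global moderate decrease.

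For the quasi split normal case, $(v)''$ supplies quasi normal $f_1,f_2$ and a threshold $T$ such that $f=f_1+f_2$ whenever $|(x,y)|>T$. For $|y|>T$ the identity holds for every $x\in\mathcal{R}$, hence $F(k_1,y)=F_1(k_1,y)+F_2(k_1,y)$, and the previous case applied to each summand produces the desired $O(1/y^2)$ bound; smoothness of $F$ again handles compact $y$. The step I expect to be the main obstacle is the bad interval bookkeeping in the quasi normal case: under the original $(v)$ a single uniform count $val$ controlled both sign changes and monotonicity changes within one decomposition, whereas $(v)'$ splits this into up to four separately positioned bad intervals for $(f_y)'$ and $(f_y)''$, and one has to verify that their union nevertheless decomposes $\mathcal{R}$ into boundedly many good intervals on which both conditions hold simultaneously, with total bad length uniformly bounded in $y$. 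Once this combinatorial observation is made, the remaining estimates are routine.
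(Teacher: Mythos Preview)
Your proposal is correct and follows essentially the same route as the paper: outside a finite union of ``bad'' intervals of uniformly bounded total length one has monotonicity and fixed sign for $(f_y)'$, so the trivial length$\times$sup bound on the bad set plus the alternating cancellation estimate on the boundedly many complementary intervals combine with the moderate decrease of $\partial f/\partial x$ to give $|F(k_1,y)|\leq D(k_1)/y^2$, and the split case is handled by linearity. The only cosmetic difference is that for $(v)''$ the paper decomposes $f_y'=f_{1,y}'+f_{2,y}'$ directly inside the integral bound rather than passing to $F=F_1+F_2$ as you do, but this is the same linearity observation.
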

\begin{proof}
We just have to replace the uses of $(v)$ in Definition \ref{normaldef} with the use of $(v)'$ or $(v)''$. For $(v)'$, we have that, for sufficiently large $y$, $({\partial f\over \partial x})_{y}$ is monotone and positive/negative outside a finite union $I_{y}$ of $S$ intervals whose total length is uniformly bounded by a constant $R\in\mathcal{R}_{>0}$ which is independent of $y$. By the usual argument, we obtain a contribution of at most $||f_{y}'||R$ to the integral over this interval. As before, splitting the integral into $cos(k_{1}x)$ and $sin(k_{1}x)$ components, we can bound the contribution of the remaining integral by ${(S+1)4\pi||f_{y}'||\over |k_{1}|}$ to give a total bound for the calculation in $(*)$ of Lemma \ref{normal} of $({(S+1)4\pi\over |k_{1}|}+R)||f_{y}'||$, and we can then, as before, use the fact that ${\partial f\over \partial x}$ has moderate decrease. For $(v)''$, we can split $f_{y}'=f_{1,y}'+f_{2,y}'$, and repeating the above argument twice obtain a total bound of $({(S+1)4\pi\over |k_{1}|}+R)(||f_{1,y}'||+||f_{2,y}'||)$, where the bounds $S$ and $R$ work for both $f_{1,y}'$ and $f_{2,y}'$. We can then use the the fact that ${\partial f_{1}\over \partial x}$ and ${\partial f_{2}\over \partial x}$ have moderate decrease.\\
\end{proof}
\begin{lemma}
\label{normal1}
Let hypotheses and notation be as in Lemma \ref{normal}, then we can define, for $k_{1}\neq 0$, $k_{2}\neq 0$;\\

$F(k_{1},k_{2})=\int_{-\infty}^{\infty}F(k_{1},y)e^{-ik_{2}y}dy$\\

$G(k_{1},k_{2})=\int_{-\infty}^{\infty}G(x,k_{2})e^{-ik_{1}x}dx$\\

We have that;\\

$F(k_{1},k_{2})=G(k_{1},k_{2})=lim_{m\rightarrow\infty,n\rightarrow\infty}\int_{-m}^{m}\int_{-n}^{n}f(x,y)e^{-ik_{1}x}e^{-ik_{2}y}dxdy$\\
\end{lemma}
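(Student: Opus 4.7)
My plan is to reduce the lemma to showing $J_{m,n}:=\int_{-m}^{m}\int_{-n}^{n}f(x,y)e^{-i(k_{1}x+k_{2}y)}\,dx\,dy$ converges jointly to $F(k_{1},k_{2})$ as $m,n\to\infty$; the symmetric argument, obtained by swapping the roles of $x,y$ and of $k_{1},k_{2}$, gives the same joint limit equal to $G(k_{1},k_{2})$, yielding both claimed equalities. First I would verify that the outer integrals defining $F(k_{1},k_{2})$ and $G(k_{1},k_{2})$ converge absolutely: by Lemma \ref{normal}, for $k_{1},k_{2}\neq 0$, the slices $F(k_{1},\cdot)$ and $G(\cdot,k_{2})$ are smooth and of moderate decrease, hence in $L^{1}(\mathcal{R})$. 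Since $f$ is continuous on the compact rectangle $[-m,m]\times[-n,n]$, classical Fubini then gives
$$J_{m,n}=\int_{-m}^{m}F_{n}(k_{1},y)e^{-ik_{2}y}\,dy=\int_{-n}^{n}G_{m}(x,k_{2})e^{-ik_{1}x}\,dx,$$
where $F_{n}(k_{1},y):=\int_{-n}^{n}f(x,y)e^{-ik_{1}x}\,dx$ and $G_{m}$ is defined analogously.

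I would then split
$$J_{m,n}-F(k_{1},k_{2})=\int_{-m}^{m}[F_{n}(k_{1},y)-F(k_{1},y)]e^{-ik_{2}y}\,dy-\int_{|y|>m}F(k_{1},y)e^{-ik_{2}y}\,dy.$$
The second term vanishes as $m\to\infty$ uniformly in $n$ by moderate decrease of $F(k_{1},\cdot)$. For the first, integration by parts in $x$ yields
$$F_{n}(k_{1},y)-F(k_{1},y)=\frac{i}{k_{1}}\bigl(f(n,y)e^{-ik_{1}n}-f(-n,y)e^{ik_{1}n}\bigr)+\frac{i}{k_{1}}\int_{|x|>n}\frac{\partial f}{\partial x}(x,y)e^{-ik_{1}x}\,dx.$$
The tail integral is controlled by the oscillation-cancellation argument of Lemma \ref{normal} restricted to $\{|x|>n\}$: using $(v)$ and the bound $|\partial_{x}f(x,y)|\leq C/(n^{2}+y^{2})$ on $\{|x|>n\}$ from $(iv)$, it has modulus of order $1/(|k_{1}|(n^{2}+y^{2}))$, which is integrable in $y$ with $L^{1}$-norm $O(1/n)$ as $n\to\infty$.

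The main obstacle is the $x$-boundary contribution: $|f(\pm n,y)|/|k_{1}|$ decays in $y$ only like $1/\sqrt{n^{2}+y^{2}}$ by $(iii)$, which is not integrable in $y$, so a direct pointwise bound is inadequate. To handle these terms I would integrate by parts once more, now in $y$:
$$\int_{-m}^{m}f(\pm n,y)e^{-ik_{2}y}\,dy=\frac{i}{k_{2}}\bigl[f(\pm n,m)e^{-ik_{2}m}-f(\pm n,-m)e^{ik_{2}m}\bigr]-\frac{i}{k_{2}}\int_{-m}^{m}\frac{\partial f}{\partial y}(\pm n,y)e^{-ik_{2}y}\,dy.$$
The corner boundary terms are of order $1/\sqrt{n^{2}+m^{2}}$ by $(iii)$ and hence vanish whenever $m$ or $n$ tends to infinity, while the remaining integral is dominated via $(iv)$ by $\int_{\mathcal{R}}C/(n^{2}+y^{2})\,dy=\pi C/n$, going to zero uniformly in $m$ as $n\to\infty$. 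Assembling these estimates gives $|J_{m,n}-F(k_{1},k_{2})|\to 0$ jointly as $m,n\to\infty$; by symmetry $J_{m,n}\to G(k_{1},k_{2})$, establishing both equalities. The technical crux is precisely the non-integrable boundary contribution after the first integration by parts in $x$, whose integrability in $y$ is restored only by a second integration by parts in the conjugate variable, exploiting moderate decrease of $\partial f/\partial y$.
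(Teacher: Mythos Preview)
Your proof is correct and shares the same skeleton as the paper's: integrate by parts in $x$ to write $F_{n}(k_{1},y)-F(k_{1},y)$ as the boundary piece $\frac{i}{k_{1}}\bigl(f(n,y)e^{-ik_{1}n}-f(-n,y)e^{ik_{1}n}\bigr)$ plus the tail $\frac{i}{k_{1}}\int_{|x|>n}\partial_{x}f\,e^{-ik_{1}x}\,dx$; bound the tail via the oscillation--cancellation estimate of Lemma~\ref{normal} together with $\|\partial_{x}f(\cdot,y)\|_{|x|>n}\leq C/(n^{2}+y^{2})$ from $(iv)$, integrate in $y$ to get $O(1/n)$ uniformly in $m$; then pass to the joint limit (the paper phrases this via Moore--Osgood, you do it directly).

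The one substantive difference is how you dispose of the boundary contribution $\int_{-m}^{m}f(\pm n,y)e^{-ik_{2}y}\,dy$. The paper applies the oscillation--cancellation estimate once more, now in the $y$-variable, invoking the uniform bound $val$ on the zeros of $(f_{\pm n})'$ from hypothesis $(v)$ and the bound $\|f_{\pm n}\|\leq C/n$ from $(iii)$, to get a multiple of $1/n$ directly. You instead integrate by parts in $y$, producing corner terms $f(\pm n,\pm m)=O(1/\sqrt{n^{2}+m^{2}})$ from $(iii)$ and a remaining integral $\int_{-m}^{m}\partial_{y}f(\pm n,y)e^{-ik_{2}y}\,dy$ which you bound crudely by $\int_{\mathcal{R}}C/(n^{2}+y^{2})\,dy=\pi C/n$ via $(iv)$. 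Your variant trades the zero-count hypothesis $(v)$ at this step for the moderate decrease of $\partial_{y}f$ from $(iv)$, which is a shade more elementary; the paper's version keeps the whole argument running on the same oscillatory-integral machinery used for the tail.
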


\begin{proof}

The definition follows from the previous lemma, as $F(k_{1},y)$ and $G(x,k_{2})$ are smooth and of moderate decrease. Integrating by parts again, using the fact that $f$ is of very moderate decrease, we have that, for $n\in\mathcal{R}_{>0}$;\\

$\int_{|x|>n}f(x,y)e^{-ik_{1}x}dx$\\

$=([{if(x,y)e^{-ik_{1}x}\over k_{1}}]_{n}^{\infty}+[{if(x,y)e^{-ik_{1}x}\over k_{1}}]_{-\infty}^{-n}-\int_{|x|>n}{i{\partial f\over \partial x}(x,y)e^{-ik_{1}x}\over k_{1}}dx)$\\

$=(-{if(n,y)e^{-ik_{1}n}\over k_{1}}+{if(-n,y)e^{ik_{1}n}\over k_{1}}-\int_{|x|>n}{i{\partial f\over \partial x}(x,y)e^{-ik_{1}x}\over k_{1}}dx)$\\

We have that, using the above calculation $(*)$ in Lemma \ref{normal}, $f$ of very moderate decrease;\\

$|\int_{|y|\leq m}({if(-n,y)e^{ik_{1}n}\over k_{1}}-{if(n,y)e^{-ik_{1}n}\over k_{1}})e^{-ik_{2}y}dy|$\\

$\leq {1\over k_{1}}|\int_{|y|\leq m}f(-n,y)e^{-ik_{2}y}dy|+{1\over k_{1}}|\int_{|y|\leq m}f(n,y)e^{-ik_{2}y}dy|$\\

$\leq {4\pi(val+1)||f_{-n}|_{|y|\leq m}||\over |k_{1}|^{2}}+{4\pi(val+1)||f_{n}|_{|y|\leq m}||\over |k_{1}|^{2}}$\\

$\leq {8\pi(val+1)C\over |k_{1}|^{2}n}$\\

and using the previous calculation $(*)$ again in Lemma \ref{normal}, we have that;\\

$|-{i\over k_{1}}\int_{|x|>n}{\partial f\over \partial x}(x,y)e^{-ik_{1}x}dx|$\\

$\leq {(val+1)\over |k_{1}|}{4\pi ||f_{y}'||_{|x|>n}\over |k_{1}|}$\\

$={4\pi(val+1)||f_{y}'||_{|x|>n}\over |k_{1}|^{2}}$\\

$\leq {4\pi(val+1)C\over |k_{1}|^{2}(y^{2}+n^{2})}$\\

so that;\\

$|-{i\over k_{1}}\int_{|y|\leq m}\int_{|x|>n}{\partial f\over \partial x}(x,y)e^{-ik_{1}x}dxe^{-ik_{2}y}dy|$\\

$\leq \int_{|y\leq m}{4\pi(val+1)C\over |k_{1}|^{2}(y^{2}+n^{2})}dy$\\

$\leq \int_{-\infty}^{\infty}{4\pi(val+1)C\over |k_{1}|^{2}(y^{2}+n^{2})}dy$\\

$=\int_{-\infty}^{\infty}{1\over n^{2}}{4\pi(val+1)C\over |k_{1}|^{2}(1+{y^{2}\over n^{2}})}dy$\\

$\leq {1\over n^{2}}{4\pi(val+1)C\over |k_{1}|^{2}}ntan^{-1}({y\over n})|^{\infty}_{-\infty}$\\

$\leq {\pi\over n}{4\pi(val+1)C\over |k_{1}|^{2}}$\\

The above calculations combine, to give that;\\

$|\int_{|y|\leq m}lim_{n\rightarrow\infty}\int_{-n}^{n}f(x,y)e^{-ik_{1}x}e^{-ik_{2}y}dxdy-\int_{|y|\leq m}\int_{|x|\leq n}f(x,y)e^{-ik_{1}x}e^{-ik_{2}y}dxdy|$\\

$=|\int_{|y|\leq m}\int_{|x|>n}f(x,y)e^{-ik_{1}x}e^{-ik_{2}y}dxdy|$\\

$\leq {8\pi(val+1)C\over |k_{1}|^{2}n}+{\pi\over n}{4\pi(val+1)C\over |k_{1}|^{2}}$\\

so that $lim_{n\rightarrow \infty}s_{m,n}=s_{m}$, uniformly in $m$, where;\\

$s_{n,m}=\int_{|y|\leq m}\int_{|x|\leq n}f(x,y)e^{-ik_{1}x}e^{-ik_{2}y}dxdy$\\

$s_{m}=\int_{|y|\leq m}F(k_{1},y)e^{-ik_{2}y}dy$\\

By the Moore-Osgood Theorem, we obtain the result.\\

\end{proof}
\begin{lemma}
\label{quasinormal1}
The same result as Lemma \ref{normal1} holds with the assumption that $f$ is smooth and quasi normal or smooth and quasi split normal.\\
\end{lemma}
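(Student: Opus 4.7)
The plan is to follow the proof of Lemma \ref{normal1} essentially verbatim, substituting the quasi normal / quasi split normal bounds provided by Lemma \ref{quasinormal} at every point where the original proof invoked the bound $(*)$ coming from condition $(v)$ of Definition \ref{normaldef}. Since Lemma \ref{quasinormal} already tells us that $F(k_{1},y)$ and $G(x,k_{2})$ are smooth and of moderate decrease under either of the new hypotheses, the iterated integrals defining $F(k_{1},k_{2})$ and $G(k_{1},k_{2})$ exist in the usual Lebesgue sense, and the only real task is the uniform-in-$m$ convergence of the partial sums $s_{n,m}$, which then unlocks the Moore-Osgood theorem and yields the equality with the double limit.

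First, I would repeat the integration-by-parts decomposition
\[
\int_{|x|>n} f(x,y)e^{-ik_{1}x}dx = -\tfrac{i}{k_{1}}f(n,y)e^{-ik_{1}n}+\tfrac{i}{k_{1}}f(-n,y)e^{ik_{1}n} - \tfrac{i}{k_{1}}\int_{|x|>n}\tfrac{\partial f}{\partial x}(x,y)e^{-ik_{1}x}dx,
\]
which is valid because $f$ is of very moderate decrease by $(iii)$, an assumption shared by all three notions. The boundary contribution, after integrating against $e^{-ik_{2}y}$ over $|y|\leq m$, is controlled by applying the quasi normal estimate from Lemma \ref{quasinormal} in the $y$-variable to $f_{\pm n}$; this yields a bound of the shape $\bigl(\tfrac{(S+1)4\pi}{|k_{2}|}+R\bigr)\|f_{\pm n}|_{|y|\leq m}\|$, and very moderate decrease of $f$ turns this into a bound of order $1/n$, uniformly in $m$. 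In the quasi split normal case one first writes $f_{\pm n}=f_{1,\pm n}+f_{2,\pm n}$ and applies the quasi normal bound to each summand, exactly as in Lemma \ref{quasinormal}.

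For the tail of the integrated derivative $\int_{|x|>n}\tfrac{\partial f}{\partial x}(x,y)e^{-ik_{1}x}dx$, I would apply the same quasi normal estimate as a function of $x$, noting that under $(v)'$ the number $S$ of exceptional intervals and their total length $R$ are uniform in $y$ for $y$ large. Combined with the moderate decrease of $\tfrac{\partial f}{\partial x}$ given by $(iv)$, this produces a pointwise bound proportional to $1/(y^{2}+n^{2})$, whose integral over $y\in\mathcal{R}$ is bounded by $\pi/n$ after the same $\arctan$ computation as in Lemma \ref{normal1}. In the quasi split normal case one splits $\tfrac{\partial f}{\partial x}=\tfrac{\partial f_{1}}{\partial x}+\tfrac{\partial f_{2}}{\partial x}$ and runs the argument on each piece, using that both pieces are of moderate decrease by quasi normality.

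Adding the two estimates gives $|s_{m}-s_{n,m}|=O(1/n)$ uniformly in $m$, so $s_{n,m}\to s_{m}$ uniformly in $m$; combined with the already established $s_{m}\to F(k_{1},k_{2})$ (from Lemma \ref{quasinormal} and the definition) and with the same argument run with the roles of $x$ and $y$ reversed for $G(k_{1},k_{2})$, the Moore-Osgood theorem produces the desired identity with the double limit. The main obstacle is purely bookkeeping: making sure that the constants $S$ and $R$ from $(v)'$ extracted in the $x$-direction and the $y$-direction are genuinely uniform in the transverse variable for sufficiently large values, so that the tail estimates do not pick up hidden $y$-dependence. In the quasi split normal case one must also check that the same $(S,R)$ can be chosen to serve for both $f_{1}$ and $f_{2}$, which follows by taking the maximum of the two sets of constants, exactly as at the end of the proof of Lemma \ref{quasinormal}.
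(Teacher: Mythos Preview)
Your proposal is correct and follows exactly the approach the paper takes: the paper's own proof of this lemma consists of the single sentence ``Again, we can replace the estimates from $(v)$ in Definition \ref{normaldef}, used in the proof of Lemma \ref{normal1}, with the estimate used in Lemma \ref{quasinormal},'' and what you have written is precisely a careful unpacking of that sentence. Your handling of the quasi split normal case (splitting into $f_{1}+f_{2}$ and taking common constants $S,R$) also matches the paper's argument at the end of Lemma \ref{quasinormal}.
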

\begin{proof}
Again, we can replace the estimates from $(v)$ in Definition \ref{normaldef}, used in the proof of Lemma \ref{normal1}, with the estimate used in Lemma \ref{quasinormal}.
\end{proof}
\begin{defn}
\label{normal3}
Let $f:{\mathcal{R}^{3}\setminus W}\rightarrow \mathcal{R}$ be smooth, with $W$ closed and bounded, we say that $f$ is of very moderate decrease if there exists a constant $C\in\mathcal{R}_{>0}$ such that $|f(x,y,z)|\leq {C\over |(x,y,z)|}$, for $|(x,y,z)|>1$. We say that $f$ is of moderate decrease if there exists a constant $C\in\mathcal{R}_{>0}$ such that $|f(x,y,z)|\leq {C\over |(x,y,z)|^{2}}$, for $|(x,y,z)|>1$. We say that $f$ is of moderate decrease $n$, if there exists a constant $C\in\mathcal{R}_{>0}$ such that $|f(x,y,z)|\leq {C\over |(x,y,z)|^{n}}$, for $|(x,y,z)|>1$, with $n\geq 2$. We say that $f$ is normal, if;\\

$(i)$. For $x\in\mathcal{R}$, $f_{x}(y,z)$ is normal.\\

$(ii)$. For $y\in\mathcal{R}$, $f_{y}(x,z)$ is normal.\\

$(iii)$. For $z\in\mathcal{R}$, $f_{z}(x,y)$ is normal.\\

$(iv)$. $f$ is of very moderate decrease.\\

$(v)$. The higher derivatives ${\partial^{i+j+k} f\over \partial x^{i}\partial y^{j}\partial z^{k}}$ are of moderate decrease $i+j+k+1$, for $i+j+k\geq 1$.\\

$(vi)$. There is a uniform bound $val(x,y)$ in the number of zeros of;\\

$f_{x,y},(f_{x,y})',(f_{x,y})'',(f_{x,y})''',(f_{x,y})''''$\\

and similarly, for $f_{x,z},f_{y,z}$.\\

We say that $f$ is quasi normal, if;\\

$(i)'$. For $x\in\mathcal{R}$, $f_{x}(y,z)$ is quasi normal, and, similarly for $(ii),(iii)$, $(iv),(v)$ hold and;\\

$(vi)'$. For sufficiently large $(x,y)$, the zeros of;\\

$f_{x,y},(f_{x,y})',(f_{x,y})'',(f_{x,y})''',(f_{x,y})''''$\\

are contained in a finite union of $S$ intervals, with total length $R$, uniform in $(x,y)$.\\

and similarly, for $f_{x,z},f_{y,z}$.\\

We say that $f$ is quasi split normal, if;\\

$(i)''$. For $x\in\mathcal{R}$, $f_{x}(y,z)$ is quasi split normal, and, similarly for $(ii),(iii)$, $(iv),(v)$ hold and;\\

$(vi)''$. For sufficiently large $(x,y)$, $f=f_{1}+f_{2}$, with $f,f_{1},f_{2}$ smooth and with $f_{1},f_{2}$ having the property $(vi)'$.\\
\end{defn}
\begin{lemma}
\label{examples2}
If $\{a,b,c\}\subset\mathcal{R}$, and $W$ is a closed ball $B((a,b,c),s)$, $s>0$, containing $(a,b,c)$, the function $f(x,y,z)|_{{\mathcal{R}^{3}\setminus W}}$\\

where $f(x,y,z)={1\over |(x,y,z)-(a,b,c)|}$, $(x,y,z)\neq (a,b,c)$\\

is normal.\\

If $\rho\geq 0$ is continuous with compact support, $\rho\neq 0$, the function;\\

$f(x,y,z)=\int_{\mathcal{R}^{3}}{\rho(x',y',z')\over |(x,y,z)-(x',y',z')|}dx'dy'dz'$\\

is quasi normal.\\

If $\rho$ is smooth with compact support, $\rho\neq 0$, the function;\\

$f(x,y,z)=\int_{\mathcal{R}^{3}}{\rho(x',y',z')\over |(x,y,z)-(x',y',z')|}dx'dy'dz'$\\

is quasi split normal.\\
\end{lemma}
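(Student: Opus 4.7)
The plan is to mirror the proof of Lemma \ref{example}, replacing the two-dimensional decay and slice estimates by three-dimensional analogues, and invoking Lemma \ref{example} itself for clauses $(i)$--$(iii)$ of Definition \ref{normal3}, which require each two-variable slice to be 2D normal, 2D quasi normal or 2D quasi split normal, respectively.

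For the first function $f(x,y,z)=1/|(x,y,z)-(a,b,c)|$, I would check the six clauses in turn. For $(i)$, fixing $x_{0}\in\mathcal{R}$, the slice $f_{x_{0}}(y,z)=1/\sqrt{(x_{0}-a)^{2}+(y-b)^{2}+(z-c)^{2}}$ has the same shape as the function treated in Lemma \ref{example}, and the Newton-binomial calculation goes through verbatim, with the constant $(x_{0}-a)^{2}$ merely shifting one coefficient of the polynomial being expanded; the remaining 2D clauses follow from the polar-limit and explicit-derivative calculations already performed there. Clauses $(ii),(iii)$ of Definition \ref{normal3} follow by symmetry. For $(iv),(v)$, write $\overline{r}=(x,y,z)$, $r=|\overline{r}|$, $\overline{a}=(a,b,c)$; each partial $\partial^{i+j+k}f/\partial x^{i}\partial y^{j}\partial z^{k}$ is a rational function whose numerator is a polynomial in $(x-a,y-b,z-c)$ of degree $i+j+k$ and whose denominator is $|\overline{r}-\overline{a}|^{2(i+j+k)+1}$, so $lim_{r\rightarrow\infty}r^{i+j+k+1}|\partial^{i+j+k}f/\partial x^{i}\partial y^{j}\partial z^{k}|$ is a bounded function of the direction $\hat{\overline{r}}$, and the compactness argument from Lemma \ref{example} extends this to all of $\mathcal{R}^{3}\setminus W$. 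For $(vi)$, fix $x_{0},y_{0}$ and set $K=(x_{0}-a)^{2}+(y_{0}-b)^{2}$; each $k$-th derivative of $f_{x_{0},y_{0}}(z)=(K+(z-c)^{2})^{-1/2}$ has the form $P_{k}(z-c)\cdot(K+(z-c)^{2})^{-(2k+1)/2}$ with $\deg P_{k}\leq k$, hence at most $k$ zeros, so we may take $val=4$.

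For the potential function of $\rho$, after the translation $(x',y',z')\mapsto (x-x',y-y',z-z')$ and a change to spherical coordinates, the volume element $r^{2}sin(\theta)drd\theta d\phi$ cancels the singularity of $1/r$ just as in the 2D case, so $f$ is defined pointwise, smooth on $\mathcal{R}^{3}$ when $\rho$ is smooth, and smooth outside $Supp(\rho)$ in general, by differentiating under the integral. For $(iv)$, the estimate $|(x,y,z)-(x',y',z')|\geq |(x,y,z)|/2$ for $|(x,y,z)|\geq 2M$ and $(x',y',z')\in Supp(\rho)\subset B(\overline{0},M)$ gives $|f|\leq (8\pi M^{3}N/3)/|(x,y,z)|$. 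For $(v)$, differentiating under the integral, using $|x-x'|\leq |(x,y,z)|+M$ and the increased power in the denominator, yields moderate decrease of order $i+j+k+1$ for the mixed partial of order $i+j+k\geq 1$. For $(i)'$--$(iii)'$, fix $x_{0}$ and verify that $f_{x_{0}}(y,z)$ is 2D quasi normal by carrying the Newton expansion of Lemma \ref{example} uniformly across the three-dimensional support of $\rho$. For $(vi)'$, fix $(x_{0},y_{0})$ large and exploit the sign of $-(z-z')/|\overline{r}-\overline{r}'|^{3}$ under the integral defining $\partial f_{x_{0},y_{0}}/\partial z$, together with the explicit form of the higher $z$-derivatives, to confine the zeros of $(f_{x_{0},y_{0}})^{(k)}$, $0\leq k\leq 4$, to intervals of uniformly bounded total length. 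The third claim then follows from the trivial splitting $f=f/2+f/2$, exactly as in the 2D case.

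The main obstacle will be $(vi)'$ for the potential of $\rho$: the 2D version succeeded because a single sign-change of $(x-x')$ in the first-derivative integrand was enough, whereas here the uniform-length bound must simultaneously cover the first four $z$-derivatives of $f_{x_{0},y_{0}}$. Iterating the $z$-derivative of $1/|\overline{r}-\overline{r}'|$ yields integrands that are polynomials in $(z-z')$ divided by $|\overline{r}-\overline{r}'|^{2k+1}$, and for each $k\leq 4$ one must, as in the computation of the zeros of $(f_{y})''$ in Lemma \ref{example}, identify a geometric cone outside which the numerator has a definite sign uniformly in $(x',y',z')\in Supp(\rho)$. This is a four-fold repetition of the polynomial-sign analysis in the 2D proof rather than a conceptual difficulty, but it is where the bulk of the computation lies.
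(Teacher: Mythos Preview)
Your outline tracks the paper's proof closely, and the first two claims are handled correctly; your degree-counting argument for $(vi)$ in the first claim (the $k$-th $z$-derivative of $(K+(z-c)^{2})^{-1/2}$ is a polynomial of degree $\leq k$ over a power of the radical) is in fact a bit slicker than the paper's explicit solving of each derivative equation up to fourth order, though both give $val=4$.

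There is, however, a genuine gap in the third claim. The trivial splitting $f=f/2+f/2$ does \emph{not} establish quasi split normality for smooth $\rho$ of arbitrary sign. The zero-localization in $(vi)'$, and in $(v)'$ of the 2D definition, rests on $\rho\geq 0$: it is the non-negativity of $\rho$ that forces a definite sign of the integrand $-\rho(x',y',z')(z-z')/|\overline{r}-\overline{r}'|^{3}$ (and its higher analogues) outside the controlled intervals. If $\rho$ changes sign, $f$ is not known to be quasi normal, so neither is $f/2$, and the splitting proves nothing. The 2D case in Lemma \ref{example} does not use the trivial splitting here; it decomposes $\rho=\rho^{+}+\rho^{-}$ into positive and (non-positive) negative parts, each continuous with compact support, and applies the quasi-normal result to the two resulting potentials separately. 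The remark $f=f/2+f/2$ appears there only to record that quasi normal implies quasi split normal, covering the degenerate case where one of $\rho^{\pm}$ vanishes. The paper's 3D proof uses the same $\rho^{\pm}$ decomposition. Once you replace the trivial splitting by this one, the rest of your argument goes through.
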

\begin{proof}
For the first claim, we have to show, for $x_{0}\in\mathcal{R}$, that $f_{x_{0}}(y,z)$ is normal. Fix $z_{0}\in\mathcal{R}$, then;\\

$f_{x_{0},z_{0}}(y)={1\over ((x_{0}-a)^{2}+(z_{0}-c)^{2}+(y-b)^{2})^{1\over 2}}$\\

Without loss of generality, assuming that $x_{0}\neq a, z_{0}\neq c$, we have that, for $y>0$;\\

 $f_{x_{0},z_{0}}({1\over y})={1\over ((x_{0}-a)^{2}+(z_{0}-c)^{2}+({1\over y}-b)^{2})^{1\over 2}}$\\

 $={y\over (y^{2}(x_{0}-a)^{2}+y^{2}(z_{0}-c)^{2}+1-2yb+y^{2}b^{2})^{1\over 2}}$\\

For $y<1$, we have that $y^{2}<y$, so that;\\

$|y^{2}(x_{0}-a)^{2}+y^{2}(z_{0}-c)^{2}-2yb+y^{2}b^{2}|<y|(x_{0}-a)^{2}+(z_{0}-c)^{2}+2|b|+b^{2}|<1$\\

iff $y<{1\over |(x_{0}-a)^{2}+(z_{0}-c)^{2}+2|b|+b^{2}|}$\\

and, applying Newton's theorem, with $b_{n}={(-1)^{n}(2n)!\over 2^{2n}(n!)^{2}}$;\\

$f_{x_{0},z_{0}}({1\over y})=y\sum_{n=0}^{\infty}b_{n}(y^{2}(x_{0}-a)^{2}+y^{2}(z_{0}-c)^{2}-2yb+y^{2}b^{2})^{n}$\\

$=\sum_{n=1}^{\infty}a_{n}y^{n}$\\

is an absolutely convergent power series in $y$ of order $1$. A similar result holds for $y<0$, so that $f_{x_{0},z_{0}}$ is analytic at infinity. Similarly, fixing $y_{0}\in\mathcal{R}$, we can see that $f_{x_{0},y_{0}}(z)$ is analytic at infinity, and $(i),(ii)$ in Definition \ref{normaldef} hold for $f_{x_{0}}(y,z)$. That $f_{x_{0}}$ is of very moderate decrease will follow from the proof below that $f$ is of very moderate decrease. As if there exists a constant $C\in\mathcal{R}_{>0}$ such that $|f|(x,y,z)\leq {C\over |x,y,z|}$, for $|(x,y,z)|\geq 1$, then if $|(y,z)|\geq 1$, we have that $|x_{0},y,z|\geq |(y,z)|\geq 1$,  and;\\

$|f_{x_{0}}|(y,z)\leq {C\over |(x_{0},y,z)|}\leq {C\over |(y,z)|}$\\

Similarly, the proof that ${\partial f_{x_{0}}\over \partial y}$ and ${\partial f_{x_{0}}\over \partial z}$ are of moderate decrease, will follow from the proof below that the higher derivatives ${\partial^{i+j+k}f\over \partial x^{i}\partial y^{j}\partial z^{k}}$ are of moderate decrease $i+j+k+1$, As if ${\partial f\over \partial y}$ is of moderate decrease $2$, so of moderate decrease, than, for $|(x,y,z)|>1$;\\

$|{\partial f\over \partial y}|\leq {C\over |(x,y,z)|^{2}}$\\

so that, for $|(y,z)|\geq 1$, as above;\\

$|{\partial f_{x_{0}}\over \partial y}|\leq {C\over |(x_{0},y,z)|^{2}}\leq {C\over |y,z|^{2}}$\\

The proof that for $y\in\mathcal{R}$, the zeros of $\{f_{x_{0},y}(z),f'_{x_{0},y}(z),f''_{x_{0},y}(z)\}$ are uniformly bounded in $y$ follows from the proof below that the zeros of $\{f_{x,y}(z),f'_{x,y}(z),f''_{x,y}(z),f'''_{x,y}(z),f''''_{x,y}(z)\}$ are uniformly bounded in $(x,y)$. It follows that for $x_{0}\in\mathcal{R}$, $f_{x_{0}}(y,z)$ is normal, similarly for $f_{y_{0}}(x,z)$ and $f_{z_{0}}(x,y)$, where $y_{0}\in\mathcal{R}$, $z_{0}\in\mathcal{R}$. We have then verified $(i)-(iii)$ of Definition \ref{normal3}. For $(iv)$, let $\overline{r}=(x,y,z)$, $r=|(x,y,z)|$, $\overline{a}=(a,b,c)$, then, switching to polars, $x=rsin(\theta)cos(\phi)$, $y=rsin(\theta)sin(\phi)$, $z=rcos(\theta)$, for $0\leq \theta\leq \pi$, $-\pi\leq \phi<\pi$;\\

$lim_{r\rightarrow\infty}r f(r,\theta,\phi)=lim_{r\rightarrow \infty}{r\over |\overline{r}(r,\theta,\phi)-\overline{a}|}$\\

$=lim_{r\rightarrow\infty}{1\over |{\overline{r}(r,\theta,\phi)\over r}-{\overline{a}\over r}|}$\\

$=lim_{r\rightarrow\infty}{1\over |\hat{\overline{r}}(r,\theta,\phi)|}$\\

$=1$\\

so that fixing a closed ball $B(\overline{0},s)\supset W$, using the fact that $f$ is smooth on $B(\overline{0},s)^{c}$, $r|f|\leq D$, where $D\in\mathcal{R}_{>0}$, on $B(\overline{0},s)^{c}$, so that $|f|\leq {D\over r}$, for $r>s$. As $f$ is continuous on ${B(\overline{0},s)\setminus W^{\circ}}$, it is bounded, by compactness of ${B(\overline{0},s)\setminus W^{\circ}}$, so that $|f|\leq M$ for $|\overline{x}|\geq 1$. It follows that $|f|\leq {C\over |\overline{x}|}$, for $|\overline{x}|\geq 1$, where $C=max(D,Ms)$. Therefore, $f|_{{\mathcal{R}^{3}\setminus W}}$ is of very moderate decrease.\\

Suppose inductively, that, for $(i+j+k\geq 0)$, ${\partial f^{i+j+k}\over \partial x^{i}\partial y^{j}\partial z^{k}}$ is of the form;\\

${p(x-a,y-b,z-c)\over |(x,y,z)-(a,b,c)|^{1+2(i+j+k)}}$, where $p$ is homogeneous of degree $i+j+k$, then, using the product rule;\\

${\partial f^{i+j+k+1}\over \partial x^{i}\partial y^{j}\partial z^{k+1}}={|(x,y,z)-(a,b,c)|^{2}{\partial p\over \partial z}(x-a,y-b,z-c)-(z-c)(1+2(i+j+k))p(x-a,y-b,z-c)\over |(x,y,z)-(a,b,c)|^{1+2(i+j+k+1)}}$\\

which is of the form;\\

${q(x-a,y-b,z-c)\over |(x,y,z)-(a,b,c)|^{1+2(i+j+k+1)}}$, where $q$ is homogeneous of degree $i+j+k+1$;\\

 as $(1+2(i+j+k))> i+j+k$, for $i+j+k\geq 0$\\

 Similar inductions work for ${\partial \over \partial x}$ and ${\partial \over \partial y}$, so that we can assume that;\\

${\partial f^{i+j+k}\over \partial x^{i}\partial y^{j}\partial z^{k}}$ is of the form;\\

${p(x-a,y-b,z-c)\over |(x,y,z)-(a,b,c)|^{1+2(i+j+k)}}$, where $p$ is homogeneous of degree $i+j+k$\\

We then have that;\\

$lim_{r\rightarrow\infty}r^{1+i+j+k}|{\partial f^{i+j+k}\over \partial x^{i}\partial y^{j}\partial z^{k}}|=lim_{r\rightarrow\infty}{r^{1+i+j+k}\sum_{i'+j'+k'=i+j+k}c_{i'j'k'}(x-a)^{i'}(y-b)^{j'}(z-c)^{k'}\over |\overline{r}-\overline{a}|^{1+2(i+j+k)}}$\\

$\leq lim_{r\rightarrow\infty}{r^{1+i+j+k}(\sum_{i'+j'+k'}|c_{i'j'k'}|)|\overline{r}-\overline{a}|^{i+j+k}\over |\overline{r}-\overline{a}|^{1+2(i+j+k)}}$\\

$=Elim_{r\rightarrow\infty}{r^{1+i+j+k}\over |\overline{r}-\overline{a}|^{1+i+j+k}}$\\

$=Elim_{r\rightarrow\infty}{1\over |{\overline{r}\over r}-{\overline{a}\over r}|^{1+i+j+k}}$\\

$=Elim_{r\rightarrow\infty}{1\over |\hat{\overline{r}}|^{1+i+j+k}}$\\

$=E$\\

where $E=\sum_{i'+j'+k'=i+j+k}|c_{i'j'k'}|$, so that fixing a closed ball $B(\overline{0},s)\supset W$, using the fact that ${\partial^{i+j+k} f\over \partial x^{i}\partial y^{j}\partial z^{k}}$ is smooth on $B(\overline{0},s)^{c}$, $r^{i+j+k+1}|{\partial f^{i+j+k}\over \partial x^{i}\partial y^{j}\partial z^{k}}|\leq D$, where $D\in\mathcal{R}_{>0}$, on $B(\overline{0},s)^{c}$, so that $|{\partial f^{i+j+k}\over \partial x^{i}\partial y^{j}\partial z^{k}}|\leq {D\over r^{i+j+k+1}}$, for $r>s$. As ${\partial^{i+j+k} f\over \partial x^{i}\partial y^{j}\partial z^{k}}$ is continuous on ${B(\overline{0},s)\setminus W^{\circ}}$, it is bounded, by compactness of ${B(\overline{0},s)\setminus W^{\circ}}$ again, so that $|{\partial^{i+j+k} f\over \partial x^{i}\partial y^{j}\partial z^{k}}|\leq M$ for $|x|\geq 1$. It follows that $|{\partial^{i+j+k} f\over \partial x^{i}\partial y^{j}\partial z^{k}}|\leq {C\over |\overline{x}|^{i+j+k+1}}$, for $|\overline{x}|\geq 1$, where $C=max(D,Ms^{i+j+k+1})$. It follows that $\{{\partial^{i+j+k} f\over \partial x^{i}\partial y^{j}\partial z^{k}}_{{\mathcal{R}^{3}\setminus W}}:i+j+k\geq 1\}$ are of moderate decrease $i+j+k+1$.\\

Fixing $\{y,z\}\subset\mathcal{R}$, we have that $f_{y,z}$ has no zeros, and, by the chain rule;\\

${\partial f\over \partial x}|_{y,z}=-{1\over 2}2(x-a){1\over ((x-a)^{2}+(y-b)^{2}+(z-c)^{2})^{3\over 2}}$\\

$=-{x-a\over ((x-a)^{2}+(y-b)^{2}+(z-b)^{2})^{3\over 2}}$\\

so that $(f_{y,z})'$ has a zero when $x=a$. We have that, by the product rule;\\

${\partial^{2} f\over \partial x^{2}}|_{y,z}=-{1\over ((x-a)^{2}+(y-b)^{2}+(z-c)^{2})^{3\over 2}}+{3(x-a)^{2}\over ((x-a)^{2}+(y-b)^{2}+(z-c)^{2})^{5\over 2}}=0$\\

iff $3(x-a)^{2}-[(x-a)^{2}+(y-b)^{2}+(z-c)^{2}]=0$\\

iff $2(x-a)^{2}=(y-b)^{2}+(z-c)^{2}$\\

iff $x-a={1\over \sqrt{2}}[(y-b)^{2}+(z-c)^{2}]^{1\over 2}$ or $x-a=-{1\over \sqrt{2}}[(y-b)^{2}+(z-c)^{2}]^{1\over 2}$\\

so that $(f_{y,z})''$ has at most $2$ zeros for $(y,z)\in\mathcal{R}^{2}$\\

Similarly;\\

${\partial^{3} f\over \partial x^{3}}|_{y,z}={9(x-a)\over ((x-a)^{2}+(y-b)^{2}+(z-c)^{2})^{5\over 2}}-{15(x-a)^{3}\over ((x-a)^{2}+(y-b)^{2}+(z-c)^{2})^{7\over 2}}=0$\\

iff $9(x-a)[(x-a)^{2}+(y-b)^{2}+(z-c)^{2}]-15(x-a)^{3}=0$\\

iff $9[(x-a)^{2}+(y-b)^{2}+(z-c)^{2}]-15(x-a)^{2}=0$\\

iff $6(x-a)^{2}=9((y-b)^{2}+(z-c)^{2})$\\

iff $x-a=\sqrt{3\over 2}[(y-b)^{2}+(z-c)^{2}]^{1\over 2}$ or $x-a=-\sqrt{3\over 2}[(y-b)^{2}+(z-c)^{2}]^{1\over 2}$\\

so that $(f_{y,z})'''$ has at most $2$ zeros for $(y,z)\in\mathcal{R}^{2}$.\\

Finally;\\

${\partial^{4} f\over \partial x^{4}}|_{y,z}={9\over ((x-a)^{2}+(y-b)^{2}+(z-c)^{2})^{5\over 2}}-{90(x-a)^{2}\over ((x-a)^{2}+(y-b)^{2}+(z-c)^{2})^{7\over 2}}+{105(x-a)^{4}\over ((x-a)^{2}+(y-b)^{2}+(z-c)^{2})^{9\over 2}}$\\

$=0$\\

iff $9[(x-a)^{2}+(y-b)^{2}+(z-c)^{2}]^{2}-90(x-a)^{2}[(x-a)^{2}+(y-b)^{2}$\\

$+(z-c)^{2}]+105(x-a)^{4}=0$\\

iff $9(u^{2}+s^{2})^{2}-90u^{2}(u^{2}+s^{2})+105u^{4}=0$\\

iff $24u^{4}-72u^{2}s^{2}+9s^{4}=0$\\

iff $8u^{4}-24u^{2}s^{2}+3s^{4}=0$\\

iff $u^{2}={24s^{2}+/-\sqrt{24^{2}s^{4}-4.24s^{4}}\over 16}$\\

iff $u^{2}={24s^{2}+/-\sqrt{480s^{4}}\over 16}$\\

iff $u^{2}={6+/-\sqrt{30}\over 4}s^{2}$\\

iff $u=\sqrt{6+\sqrt{30}}s$ or $u=\sqrt{6-\sqrt{30}}s$ or $u=-(\sqrt{6+\sqrt{30}})s$\\

 or $u=-(\sqrt{6-\sqrt{30}})s$\\

where $u=x-a$, $s=[(y-b)^{2}+(z-c)^{2}]^{1\over 2}$, so that $(f_{y,z})''''$ has at most $4$ zeros for $(y,z)\in\mathcal{R}^{2}$, and we can take $val=4$. A similar result holds for $f_{x,y}$ and $f_{x,z}$. It follows that $f$ is normal.\\

For the second claim, note that if $(x,y,z)\in Supp(\rho)$, then, switching to polars;\\

$f(x,y,z)=\int_{\mathcal{R}^{2}}{\rho(x',y',z')\over |(x,y,z)-(x',y',z')|}dx'dy'dz'$\\

$=\int_{\mathcal{R}^{3}}{\rho(x-x',y-y',z-z')\over |(x',y',z')|}dx'dy'dz'$\\

$=\int_{0,\pi}\int_{-\pi}^{\pi}\int_{\mathcal{R}_{>0}}{\rho_{x,y,z}(r,\theta,\phi)\over r}r^{2}sin(\theta)drd\theta d\phi$\\

$=\int_{0,\pi}\int_{-\pi}^{\pi}\int_{\mathcal{R}_{>0}}\rho_{x,y,z}(r,\theta,\phi)rsin(\theta)drd\theta d\phi$\\

so that;\\

$|f(x,y,z)|\leq \int_{0,\pi}\int_{-\pi}^{\pi}\int_{\mathcal{R}_{>0}}|\rho_{x,y,z}(r,\theta,\phi)|drd\theta d\phi$\\

$\leq 2M\pi^{2} R(x,y,z)$\\

where $Supp_{x,y,z}(\rho)\subset B(\overline{0},R(x,y,z))$, $\rho|\leq M$, so that $f$ is defined everywhere. If $\rho$ is smooth, we have that $f$ is smooth, as;\\

${\partial^{i+j+k}f\over \partial x^{i}\partial y^{j}\partial z^{k}}(x,y,z)={\partial^{i+j+k} \int_{\mathcal{R}^{3}}{\rho(x-x',y-y',z-z')\over |(x',y',z')|}dx'dy'dz'\over \partial x^{i}\partial y^{j}\partial z^{k}}$\\

$=\int_{\mathcal{R}^{3}}{{\partial^{i+j+k}\rho\over \partial x^{i}\partial y^{j}\partial z^{k}}(x-x',y-y',z-z')\over |(x',y',z')|}dx'dy'dz'$\\

with ${\partial^{i+j+k}\rho\over \partial x^{i}\partial y^{j}\partial z^{k}}$ having compact support again. If $\rho$ is continuous, but not necessarily smooth, we have that, for $(x,y,z)\notin Supp(\rho)$;\\

${\partial^{i+j+k}f\over \partial x^{i}\partial y^{j}\partial z^{k}}(x,y,z)=\int_{\mathcal{R}^{3}}\rho(x',y',z'){\partial^{i+j+k}\over \partial x^{i}\partial y^{j}\partial z^{k}}({1\over |(x,y,z)-(x',y',z')|})dx'dy'dz'$\\

so that $f$ is smooth, outside $Supp(\rho)$. When $\rho\geq 0$ with compact support, we have to show that $f_{x_{0}}(y,z)$ is quasi normal. For $(i)$, in Definition \ref{normaldef}, we have, for $x_{0}\in\mathcal{R},z_{0}\in\mathcal{R}$, $y>0$;\\

$f_{x_{0},z_{0}}({1\over y})=\int_{\mathcal{R}^{3}}{\rho(x',y',z')\over |(x_{0},{1\over y},z)-(x',y',z')|}dx'dy'dz'$\\

$=y\int_{\mathcal{R}^{3}}{\rho(x',y',z')\over (1+y^{2}(x_{0}-x')^{2}+y^{2}(z_{0}-z')^{2}-2yy'+y^{2}y'^{2})^{1\over 2}}$\\

so that, with $y<1$, $y^{2}<y$, letting;\\

$M_{x_{0},z_{0}}=sup_{(x',y',z')\in Supp(\rho)}|(x_{0}-x')^{2}+(z_{0}-z')^{2}+2|y'|+y'^{2}|$, if $y<{1\over M_{x_{0},z_{0}}}$, then;\\

$|y^{2}(x_{0}-x')^{2}+y^{2}(z_{0}-z')^{2}-2yy'+y^{2}y'^{2}|<y|(x_{0}-x')^{2}+(z_{0}-z)^{2}+2|y'|+y'^{2}|<1$\\

 so that, we can apply Newton's theorem uniformly in $(x',y',z')\in Supp(\rho)$, to obtain;\\

$f_{x_{0},z_{0}}({1\over y})=y\int_{Supp(\rho)}\rho(x',y',z')(\sum_{n=0}^{\infty}b_{n}(y^{2}(x_{0}-x')^{2}+y^{2}(z_{0}-z')^{2}-2yy'+y^{2}y'^{2})^{n})dx'dy'dz'$\\

where $b_{n}$ is as above. With $y<1$ again, $|x'|\leq M$, $|y'|\leq M$, $|z'|\leq M$ for $(x',y',z')\subset Supp(\rho)$, $|\rho|\leq N$, $y<{1\over (|x_{0}|+M)^{2}+(|z_{0}+M)^{2}+2M+M^{2}}$, we have, applying the DCT;\\

$|f_{x_{0},z_{0}}({1\over y})|\leq yN(2M)^{3}\sum_{n=0}^{\infty}|b_{n}|((|x_{0}|+M)^{2}y+(|z_{0}|+M)^{2}+2yM+yM^{2})^{n}$\\

$\leq 8yNM^{3}\sum_{n=0}^{\infty}|b_{n}|y^{n}((|x_{0}|+M)^{2}+(|z_{0}+M)^{2}+2M+M^{2})^{n}$\\

$\leq 8yNM^{3}\sum_{n=0}^{\infty}y^{n}((|x_{0}|+M)^{2}+(|z_{0}|+M)^{2}+2M+M^{2})^{n}$\\

defines an absolutely convergent series. A similar proof works for $y<0$. $(ii)$ is similar. For $(iii)$, in Definition \ref{normaldef}, this will follow, as above, from the main result that $f$ itself is of very moderate decrease, similarly, for $(iv)$, as above, the moderate decrease in the fibre ${\partial f\over \partial y}_{x_{0}}$ and ${\partial f\over \partial z}_{x_{0}}$, will follow from the moderate decrease $i+j+k+1$ in the derivatives ${\partial^{i+j+k}f\over \partial x^{i}\partial y^{j}\partial z^{k}}$, $i+j+k\geq 1$. For $(v)$, again, as above, the claim on the intervals and zeros follows from the main proof.\\

For $(iv)'$ in Definition \ref{normal3}, with $Supp(\rho)\subset B(\overline{0},M)$, $M>1$, $|\rho|\leq N$, if $|(x,y,z)|\geq 2M$, and $(x',y',z')\in Supp(\rho)$, $|(x,y,z)-(x',y',z)|\geq {|(x,y,z)\over 2}$, so that ${1\over |(x,y,z)-(x',y',z')|}\leq {2\over |(x,y,z)|}$, and;\\

$|f(x,y,z)|= |\int_{Supp(\rho)}{\rho(x',y',z')\over |(x,y,z)-(x',y',z')|}dx'dy'dz'|$\\

$\leq {2\over |(x,y,z)|}\int_{Supp(\rho)}|\rho(x',y',z')|dx'dy'dz'$\\

$\leq {8\pi M^{3}N\over 3|(x,y)|}$\\

For $(v)'$, we have that, combining results above, with $i+j+k\geq 1$, $(x,y,z)\notin Supp(\rho)$;\\

${\partial^{i+j+k} f\over \partial x^{i}\partial y^{j}\partial z^{k}}(x,y,z)=\int_{\mathcal{R}^{3}}{\rho(x',y',z')p(x-x',y-y',z-z')\over |(x,y,z)-(x',y',z')|^{1+2(i+j+k)}}dx'dy'dz'$\\

where $p$ is homogeneous of degree $i+j+k$. Then, if $|(x,y,z)|\geq max(2M,1)$, and $(x',y',z')\in Supp(\rho)$;\\

${1\over |(x,y,z)-(x',y',z')|^{1+2(i+j+k)}}\leq {2^{1+2(i+j+k)}\over |(x,y,z)|^{1+2(i+j+k)}}$\\

$|x-x'|\leq |(x,y,z)-(x',y',z')|\leq |(x,y,z)|+M$, and similarly $|y-y'|\leq |(x,y,z)|+M$, $|z-z'|\leq |(x,y,z)|+M$, so that;\\

$|p(x-x',y-y',z-z')|\leq T(|(x,y,z)|+M)^{i+j+k}$\\

where $T=\sum_{i'+j'+k'=i+j+k}|a_{i'j'k'}|$ and $p=\sum_{i'+j'+k'=i+j+k}a_{i'j'k'}x_{1}^{i'}x_{2}^{j'}x_{3}^{k'}$. It follows that;\\

$|{{\partial^{i+j+k} f\over \partial x^{i}\partial y^{j}\partial z^{k}}}|=|\int_{Supp(\rho)}{\rho(x',y',z')p(x-x',y-y',z-z')\over |(x,y,z)-(x',y',z')|^{1+2(i+j+k)}}dx'dy'dz'|$\\

$\leq {2^{1+2(i+j+k)}T(|(x,y,z)|+M)^{i+j+k}\over |(x,y,z)|^{1+2(i+j+k)}}\int_{Supp(\rho)}|\rho(x',y',z')|dx'dy'dz'$\\

$\leq {2^{1+2(i+j+k)}TN{4\pi M^{3}\over 3}(|(x,y,z)|+M)^{i+j+k}\over |(x,y,z)|^{1+2(i+j+k)}}$\\

$\leq {2^{1+2(i+j+k)}TN{4\pi M^{3}\over 3}(i+j+k+1)!M^{i+j+k} \over |(x,y,z)|^{1+(i+j+k)}}$\\

so that ${\partial^{i+j+k} f\over \partial x^{i}\partial y^{j}\partial z^{k}}(x,y,z)$ is of moderate decrease $i+j+k+1$.\\

For $(vi)'$, we have that if $\rho\geq 0$ is continuous with compact support, $\rho\neq 0$, that $f>0$. Repeating the argument above, and the calculation of the derivatives, fixing $(y,z)\in\mathcal{R}^{2}$, and letting $s=(x^{2}+y^{2})^{1\over 2}$, we see that the zeros of $f_{x,y}'$ are contained in the interval $(-(M+1), M+1)$, with the length $2(M+1)$ of the interval, uniformly bounded in $(y,z)$, the zeros of $(f_{x,y})''$ are contained in the intervals $(-M+{1\over \sqrt{2}}(s-M),M+{1\over \sqrt{2}}(s+M))\cup (-M-{1\over \sqrt{2}}(s+M),M-{1\over \sqrt{2}}(s-M))$, with the length of the intervals, $(2+\sqrt{2})M$ uniform in $s>2M$, the zeros of $(f_{x,y})''$ are contained in the intervals $(-M+{\sqrt{3}\over \sqrt{2}}(s-M),M+{\sqrt{3}\over \sqrt{2}}(s+M))\cup (-M-{\sqrt{3}\over \sqrt{2}}(s+M),M-{\sqrt{3}\over \sqrt{2}}(s-M))$, with the length of the intervals, $(2+\sqrt{6})M$ uniform in $s>2M$, the zeros of $(f_{x,y})'''$ are contained in the intervals $(-M+\sqrt{6+\sqrt{30}}(s-M),M+\sqrt{6+\sqrt{30}}(s+M))\cup (-M-\sqrt{6+\sqrt{30}}(s+M),M-\sqrt{6+\sqrt{30}}(s-M))$, $(-M+\sqrt{6-\sqrt{30}}(s-M),M+\sqrt{6-\sqrt{30}}(s+M))\cup (-M-\sqrt{6-\sqrt{30}}(s+M),M-\sqrt{6-\sqrt{30}}(s-M))$ with the length of the intervals, $(2+2\sqrt{6+\sqrt{30}})M$ and $(2+2\sqrt{6-\sqrt{30}})M$ uniform in $s>2M$.\\

Again the proof that if $\rho\neq 0$ is smooth with compact support, then $f$ is quasi split normal, follows easily by observing that $\rho=\rho^{+}+\rho^{-}$, with $\{\rho^{+},\rho^{-}\}$ being continuous, $\rho^{+}\geq 0$, $\rho^{-}\leq 0$, and using the proof for quasi normality, along with the previous observation in dimension $2$, that quasi normality implies quasi split normality.\\

\end{proof}
\begin{lemma}
\label{normal4}

Let $f:\mathcal{R}^{3}\rightarrow\mathcal{R}$ be normal, then, for $\{x,y,z\}\subset\mathcal{R}$, $k_{1}\neq 0$, $k_{2}\neq 0$, $k_{3}\neq 0$;\\

$A(k_{1},y,z)=lim_{r\rightarrow\infty}\int_{-r}^{r}f(x,y,z)e^{-ik_{1}x}dx$\\

$B(x,k_{2},z)=lim_{r\rightarrow\infty}\int_{-r}^{r}f(x,y,z)e^{-ik_{2}y}dy$\\

$C(x,y,k_{3})=lim_{r\rightarrow\infty}\int_{-r}^{r}f(x,y,z)e^{-ik_{3}z}dz$\\

all exist and $A(k_{1},y,z),B(x,k_{2},z),C(x,y,k_{3})$ are of moderate decrease $3$.\\

and, for $\{x,y,z\}\subset\mathcal{R}$, $k_{1}\neq 0$, $k_{2}\neq 0$, $k_{3}\neq 0$;\\

$F(k_{1},k_{2},z)=lim_{r,s\rightarrow\infty}\int_{-r}^{r}\int_{-s}^{s}f(x,y,z)e^{-ik_{1}x}e^{-ik_{2}y}dxdy$\\

$G(k_{1},y,k_{3})=lim_{r,s\rightarrow\infty}\int_{-r}^{r}\int_{-s}^{s}f(x,y,z)e^{-ik_{1}x}e^{-ik_{3}z}dxdz$\\

$H(x,k_{2},k_{3})=lim_{r,s\rightarrow\infty}\int_{-r}^{r}\int_{-s}^{s}f(x,y,z)e^{-ik_{2}y}e^{-ik_{3}z}dydz$\\

all exist and $F(k_{1},k_{2},z),G(k_{1},y,k_{3}),H(x,k_{2},k_{3})$ are of moderate decrease.\\

Moreover;\\

$F(k_{1},k_{2},z)=\int_{-\infty}^{\infty}A(k_{1},y,z)e^{-ik_{2}y}dy$\\

and corresponding results hold for $\{A,B,C,F,G,H\}$, integrating out the variables in a similar way.\\

\end{lemma}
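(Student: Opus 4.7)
My plan is to reduce to the two-dimensional Lemmas \ref{normal} and \ref{normal1} via the fiber structure provided by $(i)$--$(iii)$ of Definition \ref{normal3}, and to sharpen the decay estimates by integrating by parts twice in place of once. Fix $(y_{0},z_{0})$; by $(iii)$ the fiber $f_{z_{0}}(x,y)$ is $2$-dimensionally normal, so Lemma \ref{normal} applied to $f_{z_{0}}$ produces the limit $A(k_{1},y_{0},z_{0})$, and existence of $B$ and $C$ is identical by symmetry.

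To upgrade the decay of $A$ to moderate decrease $3$ in $(y,z)$, I would integrate by parts twice in $x$,
$$A(k_{1},y,z)=\frac{(-i)^{2}}{k_{1}^{2}}\int_{-\infty}^{\infty}\frac{\partial^{2}f}{\partial x^{2}}(x,y,z)\,e^{-ik_{1}x}\,dx,$$
the boundary terms at $\pm r$ vanishing because $f$ and $\partial_{x}f$ are of very moderate decrease and moderate decrease respectively by $(iv),(v)$. Item $(vi)$ supplies a uniform bound $val$ on the zeros of $(f_{y,z})''$, so $\partial^{2}_{xx}f(\cdot,y,z)$ has at most $val+1$ monotone pieces on $\mathcal{R}$, and item $(v)$ with $i+j+k=2$ gives $|\partial^{2}_{xx}f|\leq C/|(x,y,z)|^{3}\leq C/|(y,z)|^{3}$. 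Replaying the oscillation cancellation argument $(*)$ of Lemma \ref{normal} then yields
$$|A(k_{1},y,z)|\leq \frac{4\pi(val+1)C}{|k_{1}|^{3}\,|(y,z)|^{3}},$$
and the estimate is extended to $|(y,z)|\leq 1$ by smoothness and local boundedness as in Lemma \ref{modulus3}.

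For the double integrals I would fix $z_{0}$ and apply Lemma \ref{normal1} to the $2$-dimensional normal function $f_{z_{0}}$. This simultaneously delivers the existence of $F(k_{1},k_{2},z_{0})$ as a double limit and the iterated identity $F(k_{1},k_{2},z_{0})=\int_{-\infty}^{\infty}A(k_{1},y,z_{0})\,e^{-ik_{2}y}\,dy$; the analogous identities for $G$ and $H$ and their permutations follow by symmetry. Moderate decrease of $F$ in $z$ is then immediate from the pointwise bound on $A$,
$$|F(k_{1},k_{2},z)|\leq \int_{-\infty}^{\infty}\frac{4\pi(val+1)C}{|k_{1}|^{3}(y^{2}+z^{2})^{3/2}}\,dy=\frac{8\pi(val+1)C}{|k_{1}|^{3}\,z^{2}},$$
using the elementary evaluation $\int_{-\infty}^{\infty}(y^{2}+z^{2})^{-3/2}\,dy=2/z^{2}$; the bounds for $G$ and $H$ in the remaining variable are identical.

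The main obstacle is the single-variable step: one must verify that two integrations by parts can be run (this is why Definition \ref{normal3}$(vi)$ records zeros information up through the fourth $x$-derivative rather than only the second, leaving slack for the present argument and for higher-dimensional iterations), and that the correct exponent $3$ emerges from combining moderate decrease $i+j+k+1$ with the $1/|k_{1}|^{n}$ gain per integration by parts. Once the sharper $A,B,C$ bound is in hand, the existence of $F,G,H$ and the cross-integral identities are formal consequences of the Moore--Osgood step already carried out in Lemma \ref{normal1}, applied fiberwise.
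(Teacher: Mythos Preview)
Your proposal is correct and follows essentially the same route as the paper: reduce existence to the two-dimensional Lemmas \ref{normal} and \ref{normal1} via the fiber normality conditions $(i)$--$(iii)$, integrate by parts twice in the single transformed variable so that the oscillation/cancellation estimate can be applied to $\partial_{xx}^{2}f$ with the $|(y,z)|^{-3}$ bound coming from $(v)$, and then integrate the resulting pointwise bound on $A$ in $y$ to obtain the $|z|^{-2}$ decay of $F$. The only cosmetic differences are that the paper displays the smoothness of $A$ explicitly via differentiation under the integral sign, carries the $k_{1}$-power as $|k_{1}|^{-2}$ rather than your (sharper and in fact correct) $|k_{1}|^{-3}$, and evaluates $\int(y^{2}+z^{2})^{-3/2}\,dy$ by a trigonometric substitution rather than quoting the closed form $2/z^{2}$.
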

\begin{proof}
The first claim follows from \cite{dep} together with Lemma \ref{modulus3} and $(i),(ii),(iii)$ in Definition \ref{normal3}, using the fact that, by normality, $f_{x,y}(z),f_{x,z}(y),f_{y,z}(x)$ are analytic at infinity, for $\{(x,y),(x,z),(y,z)\}\subset\mathcal{R}^{2}$, $(*)$. We then have, using integration by parts, for $(y_{0},z_{0})\in\mathcal{R}^{2}$;\\

$\int_{-r}^{r}f(x,y_{0},z_{0})e^{-ik_{1}x}dx=[{if(x,y_{0})e^{-ik_{1}x}\over k_{1}}]_{-r}^{r}-\int_{-r}^{r}{i{\partial f\over \partial x}(x,y_{0},z_{0})e^{-ik_{1}x}\over k_{1}}dx$\\

so that, as $f_{y_{0},z_{0}}$ is of very moderate decrease, by $(iv)$ in Definition \ref{normal3}, integrating by parts;\\

$A(k_{1},y_{0},z_{0})=lim_{r\rightarrow\infty}\int_{-r}^{r}f(x,y_{0},z_{0})e^{-ik_{1}x}dx$\\

$=lim_{r\rightarrow\infty}([{if(x,y_{0},z_{0})e^{-ik_{1}x}\over k_{1}}]_{-r}^{r}-\int_{-r}^{r}{i{\partial f\over \partial x}(x,y_{0},z_{0})e^{-ik_{1}x}\over k_{1}}dx)$\\

$=lim_{r\rightarrow\infty}(-\int_{-r}^{r}{i{\partial f\over \partial x}(x,y_{0},z_{0})e^{-ik_{1}x}\over k_{1}}dx)$\\

$=-{i\over k_{1}}\int_{-\infty}^{\infty}{\partial f\over \partial x}(x,y_{0},z_{0})e^{-ik_{1}x}dx$\\

the last integral being definite, as $({\partial f\over \partial x})_{y_{0},z_{0}}$ is of moderate decrease, using $(v)$ in Definition \ref{normal3}. It follows that $A(k_{1},y,z)$ is smooth, as differentiating under the integral sign is justified by the DCT, MVT and $(v)$ again, with;\\

${\partial^{j+k} A\over \partial y^{j}\partial z^{k}}(k_{1},y,z)=-{i\over k_{1}}\int_{-\infty}^{\infty}{\partial^{1+j+k}f\over \partial x\partial y^{j}\partial z^{k}}(x,y,z)e^{-ik_{1}x}dx$\\

Integrating by parts again;\\

$A(k_{1},y_{0},z_{0})=-{1\over k_{1}^{2}}\int_{-\infty}^{\infty}{\partial^{2} f\over \partial x^{2}}(x,y_{0},z_{0})e^{-ik_{1}x}dx$\\

with ${\partial^{2} f\over \partial x^{2}}$ of moderate decrease $3$, by $(v)$ again . We have that $({\partial^{2} f\over \partial x^{2}})_{y_{0},z_{0}}$ is analytic at infinity by $(*)$ and Lemma \ref{modulus2}. By $(vi)$ in the Definition \ref{normal3}, we can assume that for sufficiently large $(y,z)$, $({\partial^{2} f\over \partial x^{2}})_{y,z}$ is monotone and positive/negative in the intervals $(-\infty,a_{1}(y,z)),\ldots,(a_{val}(y,z),\infty)$, where $a_{1},\ldots,a_{val}$ vary continuously with $y,z$. Splitting the integral into $cos(k_{1}x)$ and $sin(k_{1}x)$ components, in a similar calculation to \cite{dep2}, for any interval of length at least ${2\pi\over |k_{1}|}$, we obtain an alternating cancellation in the contribution to the integral of at most ${2\pi ||f_{y}'||\over |k_{1}|}$ and for any interval of length at most ${2\pi\over |k_{1}|}$, we obtain a contribution to the integral of at most ${4\pi ||f_{y}'||\over |k_{1}|}$. It follows that, for sufficiently large $(y,z)$, using the fact that ${\partial^{2} f\over \partial x^{2}}$ has moderate decrease $3$;\\

$|-{i\over k_{1}}\int_{-\infty}^{\infty}{\partial^{2} f\over \partial x^{2}}(x,y,z)e^{-ik_{1}x}dx|$\\

$\leq {(val+1)\over |k_{1}|}{4\pi ||f_{y,z}''||\over |k_{1}|}$\\

$={4\pi(val+1)||f_{y,z}''||\over |k_{1}|^{2}}$\\

$\leq {4\pi(val+1)C\over |k_{1}|^{2}|(y,z)|^{3}}$ $(*)$\\

so that;\\

$|A(k_{1},y,z)|\leq {D\over |y,z|^{3}}$ $(**)$\\

for sufficiently large $(y,z)$, with $D={4\pi(val+1)C\over |k_{1}|^{2}}$. As $A(k_{1},y,z)$ is smooth, we obtain that $A(k_{1},y,z)$ is of moderate decrease $3$. Similarly, we can show that $B(x,k_{2},z)$ and $C(x,y,k_{3})$ are of moderate decrease $3$, for $k_{2}\neq 0$, $k_{3}\neq 0$.\\

The second claim follows from Lemma \ref{normal1}, using normality of the fibres, $(i),(ii),(iii)$ in Definition \ref{normal3}. In fact, the first integral is indefinite, in the sense that we could define it as;\\

$F(k_{1},k_{2},z)=lim_{r_{1}\rightarrow\infty, r_{2}\rightarrow\infty, s_{1}\rightarrow\infty,s_{2}\rightarrow\infty}(\int_{-r_{1}}^{a}+\int_{a}^{r_{2}})(\int_{-s_{1}}^{b}+\int_{a}^{s_{2}})f(x,y,z)e^{-ik_{1}x}e^{-ik_{2}y}dxdy$\\

for a choice of $a,b\subset\mathcal{R}$, and similarly for $G(k_{1},y,k_{3}),H(x,k_{2},k_{3})$. Now observe that;\\

$F(k_{1},k_{2},z)=\int_{-\infty}^{\infty}A(k_{1},y,z)e^{-ik_{2}y}dy$\\

where $A_{k_{1}}$ is from the first part of the lemma, and of moderate decrease $3$. This follows from the result of Lemma \ref{normal1}, and the fact that the fibre $f_{z}$ is normal. We claim that $F(k_{1},k_{2},z)$ is of moderate decrease $2$. We have that $|A_{k_{1}}(y,z)|\leq {C_{k_{1}}\over |y,z|^{3}}$ for $|(y,z)>1$, by $(**)$. It follows that, for sufficiently large $z$;\\

$|F(k_{1},k_{2},z)|\leq |\int_{-\infty}^{\infty}A(k_{1},y,z)e^{-ik_{2}y}dy|$\\

$\leq \int_{-\infty}^{\infty}|{C_{k_{1}}\over (y^{2}+z^{2})^{3\over 2}}|dy$\\

$={C_{k_{1}}\over |z|^{3}}\int_{-\infty}^{\infty}|{C_{k_{1}}\over (1+{y^{2}\over z^{2}})^{3\over 2}}|dy$\\

$={C_{k_{1}}\over |z|^{3}}\int_{-{\pi\over 2}}^{{\pi\over 2}}{1\over sec^{3}(\theta)}zsec^{2}(\theta)d\theta$, ($(tan(\theta)={y\over z}), dy=zsec^{2}(\theta)d\theta$)\\

$={C_{k_{1}}\over |z|^{2}}\int_{-{\pi\over 2}}^{{\pi\over 2}}{1\over sec(\theta)}d\theta$\\

$\leq {C_{k_{1}}\over |z|^{2}}\int_{-{\pi\over 2}}^{{\pi\over 2}}cos(\theta)d\theta$\\

$\leq {\pi C_{k_{1}}\over |z|^{2}}$\\

so that $F(k_{1},k_{2},z)$ is of moderate decrease. Similar results hold for $G(x,k_{2},z)$ and $H(x,y,k_{3})$, with $k_{1}\neq 0$, $k_{2}\neq 0$, $k_{3}\neq 0$.

\end{proof}
\begin{lemma}
\label{normal4quasi}
Let $f:\mathcal{R}^{3}\rightarrow\mathcal{R}$, the same result as Lemma \ref{normal4} holds, if $f$ is quasi normal or $f$ is quasi split normal.\\

\end{lemma}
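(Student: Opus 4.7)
The plan is to imitate the proof of Lemma \ref{normal4} line by line, replacing each appeal to condition $(vi)$ in Definition \ref{normal3} by the corresponding appeal to $(vi)'$ (quasi normal) or $(vi)''$ (quasi split normal), exactly as was done in Lemmas \ref{quasinormal} and \ref{quasinormal1} in the two-dimensional setting. I would therefore first go through the first part of Lemma \ref{normal4}, where, after two integrations by parts, the bound on $A(k_{1},y,z)$ is obtained by splitting $\int_{-\infty}^{\infty}(\partial^{2}f/\partial x^{2})e^{-ik_{1}x}\,dx$ into $\cos(k_{1}x)$ and $\sin(k_{1}x)$ components over the monotonicity intervals of $(\partial^{2}f/\partial x^{2})_{y,z}$ and using the alternating cancellation from \cite{dep2}. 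In the quasi normal case, the assumption $(vi)'$ guarantees that the zeros of $f_{x,y}$ and its first four derivatives are, for sufficiently large $(x,y)$, contained in a finite union of $S$ intervals of total length at most $R$, both uniform in $(x,y)$; exactly as in Lemma \ref{quasinormal}, the contribution from this exceptional set is bounded by $R\,\lVert f''_{y,z}\rVert$, while the remaining monotonicity intervals contribute at most $\tfrac{4\pi(S+1)\lVert f''_{y,z}\rVert}{|k_{1}|}$, giving a total bound of the form $\bigl(\tfrac{4\pi(S+1)}{|k_{1}|}+R\bigr)\lVert f''_{y,z}\rVert$. Combining with the moderate decrease $3$ of $\partial^{2}f/\partial x^{2}$ from $(v)$ yields the estimate $|A(k_{1},y,z)|\leq D/|(y,z)|^{3}$ for sufficiently large $(y,z)$, and hence $A$ is of moderate decrease $3$; the same argument applies to $B$ and $C$.

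For the second part of Lemma \ref{normal4}, concerning $F(k_{1},k_{2},z)$, $G(k_{1},y,k_{3})$, $H(x,k_{2},k_{3})$, the existence and the identity $F(k_{1},k_{2},z)=\int_{-\infty}^{\infty}A(k_{1},y,z)e^{-ik_{2}y}\,dy$ was deduced in the normal case from Lemma \ref{normal1} applied to the fibre $f_{z}$, which was normal by $(iii)$ of Definition \ref{normal3}. In the quasi normal case, $(iii)'$ tells us the fibre $f_{z}$ is quasi normal, so I would simply invoke Lemma \ref{quasinormal1} in place of Lemma \ref{normal1}; this gives both the existence of the double limit and the integrated representation. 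The moderate decrease of $F(k_{1},k_{2},z)$ then follows by the same one-dimensional integral estimate $\int_{-\infty}^{\infty}(y^{2}+z^{2})^{-3/2}\,dy = 2/z^{2}$ already carried out in Lemma \ref{normal4}, using the new bound $(**)$ on $A(k_{1},y,z)$ which we just established. The analogous results for $G$ and $H$ follow by permutation of coordinates.

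For the quasi split normal case, write $f=f_{1}+f_{2}$ on the region where the decomposition from $(vi)''$ holds, with $f_{1},f_{2}$ quasi normal and smooth. All integrals in sight are linear in $f$, so the single and double transforms split as $A=A_{1}+A_{2}$, $F=F_{1}+F_{2}$, and so on; by the quasi normal case just proved, each summand satisfies the required decrease, hence so does the sum. Outside this region, $f$ is assumed smooth with the required decrease by $(iv),(v)$, so the contribution from a bounded remainder set is harmless.

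The only real obstacle, as in Lemma \ref{quasinormal}, is bookkeeping: one must check that the uniform bounds $S$, $R$ on the exceptional intervals for all the derivatives $f_{x,y},(f_{x,y})',\ldots,(f_{x,y})''''$ (and their $x,z$ and $y,z$ analogues) are strong enough to absorb the two integrations by parts and to survive the integration over the remaining variable in the passage from $A$ to $F$. Since the bounds in $(vi)'$ and $(vi)''$ cover precisely the derivatives up to fourth order that appear in these two integrations by parts, this verification is routine and I would leave the detailed constants to the reader, in keeping with the style of Lemma \ref{quasinormal1}.
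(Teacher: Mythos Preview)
Your proposal is correct and follows essentially the same approach as the paper: the paper's proof is simply a terse instruction to replace each use of $(i)$--$(vi)$ in the proof of Lemma \ref{normal4} by the corresponding $(i)'$--$(vi)'$ or $(i)''$--$(vi)''$, citing Lemma \ref{quasinormal} for the method of handling $(vi)'$, Lemma \ref{quasinormal1} for the second claim via quasi normality of the fibres, and the splitting argument at the end of Lemma \ref{quasinormal} for the quasi split normal case. Your write-up is a faithful and more explicit unpacking of exactly this outline.
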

\begin{proof}
Again, we just have to replace the uses of $(i)-(vi)$ in Definition \ref{normal3}, within Lemma \ref{normal4}, with the use of $(i)-(vi)'$ or $(i)-(vi)''$. The method of replacing $(vi)$ by $(vi)'$ is given in Lemma \ref{quasinormal}. The fact that we can replace $(i),(ii),(iii)$ by $(i)',(ii),(iii)'$ at the beginning of the proof of the second claim follows from Lemma \ref{quasinormal1}, and at the beginning of the proof from Definition \ref{normaldef}. The use of $(iv),(v)$ and $(iv)'(v')$ is the same. A similar argument works in the quasi split normal case, using the argument at the end of Lemma \ref{quasinormal}.

\end{proof}
\begin{lemma}
\label{normal5}
Let hypotheses and notation be as in the previous lemma, then we can define, for $k_{1}\neq 0$, $k_{2}\neq 0$, $k_{3}\neq 0$;\\

$A(k_{1},k_{2},k_{3})=\int_{-\infty}^{\infty}\int_{-\infty}^{\infty}A(k_{1},y,z)e^{-ik_{2}y}e^{-ik_{3}z}dydz$\\

$B(k_{1},k_{2},k_{3})=\int_{-\infty}^{\infty}\int_{-\infty}^{\infty}B(x,k_{2},z)e^{-ik_{1}x}e^{-ik_{3}z}dxdz$\\

$C(k_{1},k_{2},k_{3})=\int_{-\infty}^{\infty}\int_{-\infty}^{\infty}B(x,y,k_{3})e^{-ik_{1}x}e^{-ik_{2}y}dxdy$\\

$F(k_{1},k_{2},k_{3})=\int_{-\infty}^{\infty}F(k_{1},k_{2},z)e^{-ik_{3}z}dz$\\

$G(k_{1},k_{2},k_{3})=\int_{-\infty}^{\infty}G(k_{1},y,k_{3})e^{-ik_{2}y}dy$\\

$H(k_{1},k_{2},k_{3})=\int_{-\infty}^{\infty}H(x,k_{2},k_{3})e^{-ik_{1}x}dx$\\

We have that;\\

$A(k_{1},k_{2},k_{3})=B(k_{1},k_{2},k_{3})=C(k_{1},k_{2},k_{3})=F(k_{1},k_{2},k_{3})=G(k_{1},k_{2},k_{3})=H(k_{1},k_{2},k_{3})$ $\dag)$\\

$=lim_{r\rightarrow\infty,s\rightarrow\infty,t\rightarrow\infty}\int_{-r}^{r}\int_{-s}^{s}\int_{-t}^{t}f(x,y,z)e^{-ik_{1}x}e^{-ik_{2}y}e^{-ik_{3}z}dxdydz$ $(\dag\dag)$\\
\end{lemma}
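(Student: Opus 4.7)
The plan is to first confirm each of the six displayed integrals is well-defined, then verify the chain of equalities $(\dag)$ by an iterated Fubini/Moore--Osgood argument that parallels Lemma \ref{normal1}, and finally identify the common value with the triple-limit $(\dag\dag)$.

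For existence, by Lemma \ref{normal4} the functions $A(k_{1},y,z)$, $B(x,k_{2},z)$, $C(x,y,k_{3})$ are of moderate decrease $3$, so they are absolutely integrable in the remaining two variables and the outer double integrals defining $A(k_{1},k_{2},k_{3})$, $B(k_{1},k_{2},k_{3})$, $C(k_{1},k_{2},k_{3})$ converge. Similarly, $F(k_{1},k_{2},z)$, $G(k_{1},y,k_{3})$, $H(x,k_{2},k_{3})$ are of moderate decrease in their remaining real variable, so the integrals defining $F(k_{1},k_{2},k_{3})$, $G(k_{1},k_{2},k_{3})$, $H(k_{1},k_{2},k_{3})$ also converge absolutely.

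The core identity I would use for the chain $(\dag)$ is already given in Lemma \ref{normal4}, namely $F(k_{1},k_{2},z)=\int_{-\infty}^{\infty}A(k_{1},y,z)e^{-ik_{2}y}dy$. Substituting this into the definition of $F(k_{1},k_{2},k_{3})$ gives an iterated integral of $A(k_{1},y,z)e^{-ik_{2}y}e^{-ik_{3}z}$ in the order $dy\,dz$; since $A(k_{1},\cdot,\cdot)$ is of moderate decrease $3$ in $(y,z)$, Fubini's theorem justifies swapping the order of integration, yielding $F(k_{1},k_{2},k_{3})=A(k_{1},k_{2},k_{3})$. The analogous identities coming from the other two slice lemmas (the versions of Lemma \ref{normal4} with the roles of $(x,y,z)$ permuted) give $G(k_{1},k_{2},k_{3})=B(k_{1},k_{2},k_{3})$ and $H(k_{1},k_{2},k_{3})=C(k_{1},k_{2},k_{3})$. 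To close the chain it remains to link the three ``single-variable first'' transforms $A(k_{1},k_{2},k_{3})$, $B(k_{1},k_{2},k_{3})$, $C(k_{1},k_{2},k_{3})$: here I would unfold the definition of $A(k_{1},y,z)$ back to $\lim_{r\to\infty}\int_{-r}^{r}f(x,y,z)e^{-ik_{1}x}dx$ and apply Moore--Osgood in the $(x,y)$-variables, precisely as in Lemma \ref{normal1}, to identify $A(k_{1},k_{2},k_{3})$ with $F(k_{1},k_{2},k_{3})$ directly; by symmetry the same trick gives the remaining identifications.

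For the triple-limit identity $(\dag\dag)$, I would iterate Moore--Osgood three times. Writing
\[
s_{r,s,t}=\int_{-r}^{r}\!\int_{-s}^{s}\!\int_{-t}^{t}f(x,y,z)e^{-ik_{1}x}e^{-ik_{2}y}e^{-ik_{3}z}dxdydz,
\]
the inner calculation of Lemma \ref{normal1}, applied to the normal fibre $f_{z}$ (using $(iii)$ of Definition \ref{normal3}), shows that $\lim_{r,s\to\infty}s_{r,s,t}=\int_{|z|\le t}F(k_{1},k_{2},z)e^{-ik_{3}z}dz$, and the estimates driving that limit (based on $(*)$ in Lemma \ref{normal1}, which depend only on the moderate decrease of $\partial f/\partial x$ and the zero bounds from $(vi)$) are uniform in the outer parameter $t$. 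The Moore--Osgood theorem then permits exchanging $\lim_{r,s\to\infty}$ with $\lim_{t\to\infty}$, producing $F(k_{1},k_{2},k_{3})$ as the common value.

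The main obstacle, as in Lemma \ref{normal1}, is producing the uniform-in-outer-variable estimates that justify each Moore--Osgood exchange; concretely one must show that the ``tail contributions'' $\int_{|x|>n}f(x,y,z)e^{-ik_{1}x}dx$ and the corresponding double-tail contributions decay in $n$ (or in $(n,m)$) at a rate independent of the remaining outer integration variables. This is where condition $(vi)$ of Definition \ref{normal3}, giving a uniform bound $val$ on the zeros of the relevant derivatives of the slices, combined with the moderate decrease $i+j+k+1$ from $(v)$, is essential: it reproduces the estimate $(*)$ of Lemma \ref{normal} with constants uniform in the outer variables, so that the tails decay like $1/(n|(y,z)|^{2})$ and integrate to a bound in $1/n$ independent of $m,t$. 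Once these uniform tail bounds are in hand, the three-fold Moore--Osgood is routine and yields $(\dag\dag)$.
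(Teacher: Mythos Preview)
Your existence argument and the identification $F=A$, $G=B$, $H=C$ via Fubini (using that $A(k_{1},\cdot,\cdot)$ has moderate decrease $3$, hence is $L^{1}(\mathcal{R}^{2})$) are fine and match the paper. The gap is in your argument for the triple limit $(\dag\dag)$.

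You propose to lift the uniform tail estimates of Lemma \ref{normal1} to the fibres $f_{z}$ and then claim that ``the tails decay like $1/(n|(y,z)|^{2})$ and integrate to a bound in $1/n$ independent of $m,t$.'' This is where the argument breaks. The estimate $(*)$ in Lemma \ref{normal1}, applied with one integration by parts in $x$, gives a bound of the form $C/(n^{2}+y^{2}+z^{2})$ for the tail $\int_{|x|>n}$ (from moderate decrease $2$ of $\partial f/\partial x$), and neither this nor your stated $1/(n|(y,z)|^{2})$ is integrable over $(y,z)\in\mathcal{R}^{2}$: both blow up logarithmically. So the convergence of $s_{r,s,t}$ to $\int_{|z|\le t}F(k_{1},k_{2},z)e^{-ik_{3}z}dz$ as $r,s\to\infty$ is \emph{not} uniform in $t$ on the basis of that estimate, and the Moore--Osgood step is not justified.

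The paper's proof fixes exactly this. For the single-tail pieces $\int_{|y|>t}\int_{-\infty}^{\infty}$ and $\int_{|x|>s}\int_{-\infty}^{\infty}$ it integrates by parts \emph{three} times in the inner variable to reach $\partial^{3}f/\partial x^{3}$ (resp.\ $\partial^{3}f/\partial y^{3}$), which by $(v)$ of Definition \ref{normal3} has moderate decrease $4$ and hence lies in $L^{1}(\mathcal{R}^{3})$; the resulting tail then gives a bound $O(1/t)$ (resp.\ $O(1/s)$) uniform in $r$. For the double-tail $\int_{|x|>s}\int_{|y|>t}$ a separate and more elaborate integration-by-parts decomposition is needed (producing boundary terms in $f$ and $\partial f/\partial x$ along the lines $x=\pm s$, $y=\pm t$, each controlled via the uniform zero bound $val$ from $(vi)$, plus an $L^{1}$ remainder in $\partial^{3}f/\partial y^{3}$), and this is what yields a bound $O(1/|(s,t)|)$ uniform in $r$. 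Your outline does not account for the need to go to third-order derivatives, nor for the separate treatment of the corner region $|x|>s$, $|y|>t$, and without these the uniformity required for Moore--Osgood fails.
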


\begin{proof}
The definitions follows from Lemma \ref{normal4}, using the fact that $A(k_{1},y,z)$ is of moderate decrease $3$ and smooth, so it belongs to $L^{1}(\mathcal{R}^{2})$. We can then use the usual Fourier transform. Similarly, for $B(x,k_{2},z)$ and $C(x,y,k_{3})$.\\

Similarly, as $F(k_{1},k_{2},z)$ is smooth and of moderate decrease, we can then define the usual Fourier transform, for $k_{3}\neq 0$;\\

$F(k_{1},k_{2},k_{3})=\int_{-\infty}^{\infty}F(k_{1},k_{2},z)e^{-ik_{3}z}dz$\\

It is clear that $(\dag)$ holds, from the last claim in Lemma \ref{normal4}, once we have shown that $F=G=H$ and $(\dag\dag)$.\\

We have that;\\

$|\int_{-r}^{r}(F(k_{1},k_{2},z)-\int_{-s}^{s}\int_{-t}^{t}f(x,y,z)e^{-ik_{1}x}e^{-ik_{2}y}dxdy)e^{-ik_{3}z}dz|$\\

$=|\int_{-r}^{r}(\int_{-\infty}^{\infty}\int_{-\infty}^{\infty}f(x,y,z)e^{-ik_{1}x}e^{-ik_{2}y}dxdy-\int_{-s}^{s}\int_{-t}^{t}f(x,y,z)e^{-ik_{1}x}e^{-ik_{2}y}dxdy)e^{-ik_{3}z}dz|$\\

$=|\int_{-r}^{r}(\int_{(|x|\leq s,|y\leq t)^{c}}f(x,y,z)e^{-ik_{1}x}e^{-ik_{2}y}dxdy)e^{-ik_{3}z}dz|$\\

$=|\int_{-r}^{r}[\int_{|y|>t}\int_{-\infty}^{\infty}f(x,y,z)e^{-ik_{1}x}e^{-ik_{2}y}dxdy$\\

$+\int_{|x|>s}\int_{-\infty}^{\infty}f(x,y,z)e^{-ik_{1}x}e^{-ik_{2}y}dxdy$\\

$-\int_{|x|>s}\int_{|y|>t}f(x,y,z)e^{-ik_{1}x}e^{-ik_{2}y}dxdy]e^{-ik_{3}z}dz|$\\

$\leq |\int_{-r}^{r}(\int_{|y|>t}\int_{-\infty}^{\infty}f(x,y,z)e^{-ik_{1}x}e^{-ik_{2}y}dxdy)e^{-ik_{3}z}dz|$ $(i)$\\

$+|\int_{-r}^{r}(\int_{|x|>s}\int_{-\infty}^{\infty}f(x,y,z)e^{-ik_{1}x}e^{-ik_{2}y}dxdy)e^{-ik_{3}z}dz|$ $(ii)$\\

$+|\int_{-r}^{r}(\int_{|x|>s}\int_{|y|>t}f(x,y,z)e^{-ik_{1}x}e^{-ik_{2}y}dxdy)e^{-ik_{3}z}dz|$ $(iii)$\\

We estimate the three terms separately, using integration by parts, for $k_{1}\neq 0$, $k_{2}\neq 0$, $k_{3}\neq 0$. For $(i)$, as $f$ is of very moderate decrease, ${\partial^{3} f\over \partial x^{3}}$ is of moderate decrease $4$;\\

$|\int_{-r}^{r}(\int_{|y|>t}\int_{-\infty}^{\infty}f(x,y,z)e^{-ik_{1}x}e^{-ik_{2}y}dxdy)e^{-ik_{3}z}dz|$\\

$=|{-i\over k_{1}}\int_{-r}^{r}(\int_{|y|>t}\int_{-\infty}^{\infty}{\partial f\over \partial x}(x,y,z)e^{-ik_{1}x}e^{-ik_{2}y}dxdy)e^{-ik_{3}z}dz|$\\

$=|-{1\over k_{1}^{2}}\int_{-r}^{r}(\int_{|y|>t}\int_{-\infty}^{\infty}{\partial^{2} f\over \partial x^{2}}(x,y,z)e^{-ik_{1}x}e^{-ik_{2}y}dxdy)e^{-ik_{3}z}dz|$\\

$=|{i\over k_{1}^{3}}\int_{-r}^{r}(\int_{|y|>t}\int_{-\infty}^{\infty}{\partial^{3} f\over \partial x^{3}}(x,y,z)e^{-ik_{1}x}e^{-ik_{2}y}dxdy)e^{-ik_{3}z}dz|$\\

$\leq {1\over |k_{1}|^{3}}\int_{|y|>t}\int_{-\infty}^{\infty}\int_{\infty}^{\infty}|{\partial^{3} f\over \partial x^{3}}|dxdzdy$\\

$\leq {1\over |k_{1}|^{3}}\int_{|y|>t}\int_{-\infty}^{\infty}\int_{\infty}^{\infty}{C\over (x^{2}+y^{2}+z^{2})^{2}}dxdzdy$\\

$={1\over |k_{1}|^{3}}\int_{|y|>t}\int_{-\infty}^{\infty}{1\over (y^{2}+z^{2})^{2}}\int_{\infty}^{\infty}{C\over (1+{x^{2}\over y^{2}+z^{2}})^{2}}dxdzdy$\\

$={1\over |k_{1}|^{3}}\int_{|y|>t}\int_{-\infty}^{\infty}{1\over (y^{2}+z^{2})^{2}}\int_{-\pi\over 2}^{\pi\over 2}{C\over sec^{4}(\theta)}sec^{2}(\theta)(y^{2}+z^{2})^{1\over 2}d\theta dz dy$\\

$\leq {C\over k_{1}^{3}}\int_{|y|>t}\int_{-\infty}^{\infty}{1\over (y^{2}+z^{2})^{3\over 2}}\int_{-\pi\over 2}^{\pi\over 2}Ccos^{2}(\theta)d\theta dz dy$\\

$\leq {C\pi\over |k_{1}|^{3}}\int_{|y|>t}\int_{-\infty}^{\infty}{1\over (y^{2}+z^{2})^{3\over 2}}dzdy$\\

$={C\pi\over |k_{1}|^{3}}\int_{|y|>t}{1\over |y|^{3}} \int_{-\infty}^{\infty}{1\over (1+{z^{2}\over y^{2}})^{3\over 2}}dzdy$\\

$\leq {C\pi\over |k_{1}|^{3}}\int_{|y|>t}{1\over |y|^{2}} \int_{-{\pi\over 2}}^{{\pi\over 2}}cos^{2}(\theta)d\theta dy$\\

$\leq {C\pi^{2}\over |k_{1}|^{3}}\int_{|y|>t}{1\over |y|^{2}}dy$\\

$\leq {2C\pi^{2}\over |k_{1}|^{3}t}$ $(\dag)$\\

For $(ii)$, as $f$ is of very moderate decrease, ${\partial^{3} f\over \partial y^{3}}$ is of moderate decrease $4$, using the same argument;\\

$|\int_{-r}^{r}(\int_{|x|>s}\int_{-\infty}^{\infty}f(x,y,z)e^{-ik_{1}x}e^{-ik_{2}y}dxdy)e^{-ik_{3}z}dz|\leq {2C\pi^{2}\over |k_{2}|^{3}s}$\\

For $(iii)$, we have that, using integration by parts again;\\

$|\int_{-r}^{r}(\int_{|x|>s}\int_{|y|>t}f(x,y,z)e^{-ik_{1}x}e^{-ik_{2}y}dxdy)e^{-ik_{3}z}dz|$\\

$=|\int_{-r}^{r}(\int_{|x|>s}{i\over k_{2}}(-f(x,t,z)+f(x,-t,z))e^{-ik_{1}x}dx$\\

$-{i\over k_{2}}\int_{|x|>s}\int_{|y|>t}{\partial f\over \partial y}(x,y,z)e^{-ik_{1}x}e^{-ik_{2}y}dxdy)e^{-ik_{3}z}dz|$\\

$\leq |\int_{-r}^{r}\int_{|x|>s}{i\over k_{2}}(-f(x,t,z)+f(x,-t,z))e^{-ik_{1}x}e^{-ik_{3}z}dxdz|$\\

$+|\int_{-r}^{r}{i\over k_{2}}\int_{|x|>s}\int_{|y|>t}{\partial f\over \partial y}(x,y,z)e^{-ik_{1}x}e^{-ik_{2}y}dxdy)e^{-ik_{3}z}dz|$\\

$=|\int_{-r}^{r}{-1\over k_{1}k_{2}}(-f(s,-t,z)+f(-s,-t,z)+f(s,t,z)-f(-s,-t,z))e^{-ik_{3}z}dz$\\

$+\int_{-r}^{r}\int_{|x|>s}{1\over k_{1}k_{2}}(-{\partial f\over \partial x}(x,t,z)+{\partial f\over \partial x}(x,-t,z))e^{-ik_{1}x}e^{-ik_{3}z}dxdz|$\\

$+|\int_{-r}^{r}{i\over k_{2}}\int_{|x|>s}\int_{|y|>t}{\partial f\over \partial y}f(x,y,z)e^{-ik_{1}x}e^{-ik_{2}y}dxdy)e^{-ik_{3}z}dz|$\\

$\leq |\int_{-r}^{r}{-1\over k_{1}k_{2}}(-f(s,-t,z)+f(-s,-t,z)+f(s,t,z)-f(-s,-t,z))e^{-ik_{3}z}dz|$\\

$+|\int_{-r}^{r}\int_{|x|>s}{1\over k_{1}k_{2}}(-{\partial f\over \partial x}(x,t,z)+{\partial f\over \partial x}(x,-t,z))e^{-ik_{1}x}e^{-ik_{3}z}dxdz|$\\

$+|\int_{-r}^{r}{i\over k_{2}}\int_{|x|>s}\int_{|y|>t}{\partial f\over \partial y}f(x,y,z)e^{-ik_{1}x}e^{-ik_{2}y}dxdy)e^{-ik_{3}z}dz|$\\

$=|\int_{-r}^{r}{-1\over k_{1}k_{2}}(-f(s,-t,z)+f(-s,-t,z)+f(s,t,z)-f(-s,-t,z))e^{-ik_{3}z}dz|$\\

$+|\int_{-r}^{r}{i\over k_{1}^{2}k_{2}}({\partial f\over \partial x}(s,t,z)-{\partial f\over \partial x}(-s,t,z)-{\partial f\over \partial x}(s,-t,z)+{\partial f\over \partial x}(-s,-t,z))e^{-ik_{3}z}dz$\\

$-\int_{-r}^{r}{i\over k_{1}^{2}k_{2}}\int_{|x|>s}(-{\partial^{2}f\over \partial x^{2}}(x,t,z)+{\partial^{2}f\over \partial x^{2}}(x,-t,z))e^{-ik_{1}x}e^{-ik_{3}z}dxdz|$\\

$+|\int_{-r}^{r}{i\over k_{2}}\int_{|x|>s}\int_{|y|>t}{\partial f\over \partial y}(x,y,z)e^{-ik_{1}x}e^{-ik_{2}y}dxdy)e^{-ik_{3}z}dz|$\\

$\leq {1\over |k_{1}k_{2}|}|\int_{-r}^{r}(-f(s,-t,z)+f(-s,-t,z)+f(s,t,z)-f(-s,-t,z))e^{-ik_{3}z}dz|$\\

 $(a)$\\

$+{1\over |k_{1}^{2}k_{2}|}|\int_{-r}^{r}({\partial f\over \partial x}(s,t,z)-{\partial f\over \partial x}(-s,t,z)-{\partial f\over \partial x}(s,-t,z)+{\partial f\over \partial x}(-s,-t,z))e^{-ik_{3}z}dz|$\\

 $(b)$\\

$+{1\over |k_{1}^{2}k_{2}|}|\int_{-r}^{r}\int_{|x|>s}(-{\partial^{2}f\over \partial x^{2}}(x,t,z)+{\partial^{2}f\over \partial x^{2}}(x,-t,z))e^{-ik_{1}x}e^{-ik_{3}z}dxdz|$ $(c)$\\

$+{1\over |k_{2}|}|\int_{-r}^{r}\int_{|x|>s}\int_{|y|>t}{\partial f\over \partial y}(x,y,z)e^{-ik_{1}x}e^{-ik_{2}y}dxdy)e^{-ik_{3}z}dz|$ $(d)$\\

For $(a),(b)$, by the definition of normality, the fibres $\{f_{s,t},f_{-s,t},f_{s,-t},f_{-s,-t}\}$ are analytic at infinity, and of very moderate decrease, and, similarly, the fibres $\{{\partial f\over \partial x}_{s,t},{\partial f\over \partial x}_{-s,t},{\partial f\over \partial x}_{s,-t},{\partial f\over \partial x}_{-s,-t}\}$ are analytic at infinity, and of moderate decrease. Using the bound $val$ on the zeros, uniform in $(s,t)$, we can then repeat the calculation above $8$ times, to obtain that, uniformly in $r$;\\

$(a)+(b)\leq  {4\over |k_{1}k_{2}k_{3}|}(val+1)4\pi max(||f_{s,t}||,||f_{-s,t}||,||f_{s,-t}||,||f_{-s,-t}||)$\\

$+{4\over |k_{1}^{2}k_{2}k_{3}|}(val+1)4\pi max(||{\partial f\over \partial x}_{s,t}||,||{\partial f\over \partial x}_{-s,t}||,||{\partial f\over \partial x}_{s,-t}||,||{\partial f\over \partial x}_{-s,-t}||)$\\

$\leq {4C\over |k_{1}k_{2}k_{3}||(s,t)|}(val+1)4\pi+{4\over |k_{1}^{2}k_{2}k_{3}|(s,t)|^{2}}(val+1)4\pi$\\

For $(c)$, we have that ${\partial^{2}f\over \partial x^{2}}$ is of moderate decrease $3$, hence belongs to $L^{1}(\mathcal{R}^{2})$, so we can repeat the calculation above, to obtain that, uniformly in $r$;\\

$(c)\leq {1\over |k_{1}^{2}k_{2}|}|\int_{-\infty}^{\infty}\int_{|x|>s}|-{\partial^{2}f\over \partial x^{2}}(x,t,z)+{\partial^{2}f\over \partial x^{2}}(x,-t,z)|dxdz$\\

$\leq {4C\over |k_{1}^{2}k_{2}|}\int_{|x|>s}\int_{-\infty}^{\infty}{1\over (x^{2}+z^{2}+t^{2})^{3\over 2}}dzdx$\\

$\leq {4C\pi\over |k_{1}^{2}k_{2}|}\int_{|x|>s}{1\over (x^{2}+t^{2})}dx$\\

$\leq {4C\pi\over |k_{1}^{2}k_{2}|}\int_{|x|>s}{1\over x^{2}}dx$\\

$\leq {8C\pi\over |k_{1}^{2}k_{2}|s}$\\

We also have that;\\

$|\int_{|x|>s}{1\over (x^{2}+t^{2})}dx\leq {1\over t^{2}}\int_{-\infty}^{\infty}{1\over 1+{x^{2}\over t^{2}}}dx$\\

$={1\over t}[tan^{-1}({x\over t})]^{\infty}_{-\infty}$\\

$={\pi\over t}$\\

so that;\\

$(c)\leq min({8C\pi\over |k_{1}^{2}k_{2}|s},{4C\pi^{2}\over |k_{1}^{2}k_{2}|t})$\\

$\leq {4\sqrt{2}C\pi^{2}\over |k_{1}^{2}k_{2}||s,t|}$\\

For $(d)$, we can combine $(a),(b),(c)$ to obtain that;\\

$|\int_{-r}^{r}(\int_{|x|>s}\int_{|y|>t}f(x,y,z)e^{-ik_{1}x}e^{-ik_{2}y}dxdy)e^{-ik_{3}z}dz|$\\

$\leq {4C\over |k_{1}k_{2}k_{3}||(s,t)|}(val+1)4\pi+{4\over |k_{1}^{2}k_{2}k_{3}|(s,t)|^{2}}(val+1)4\pi+{4\sqrt{2}C\pi^{2}\over |k_{1}^{2}k_{2}||s,t|}$\\

$+{1\over |k_{2}|}|\int_{-r}^{r}\int_{|x|>s}\int_{|y|>t}{\partial f\over \partial y}(x,y,z)e^{-ik_{1}x}e^{-ik_{2}y}dxdy)e^{-ik_{3}z}dz|$\\

$\leq {E\over |(s,t)|}+{1\over |k_{2}|}|\int_{-r}^{r}\int_{|x|>s}\int_{|y|>t}{\partial f\over \partial y}(x,y,z)e^{-ik_{1}x}e^{-ik_{2}y}dxdy)e^{-ik_{3}z}dz|$\\

$|(s,t)|>1$\\

so that repeating the above argument with ${\partial f\over \partial y}$ replacing $f$;\\

$(d)\leq {F\over |(s,t)||k_{2}|}+{1\over |k_{2}|^{2}}|\int_{-r}^{r}\int_{|x|>s}\int_{|y|>t}{\partial^{2} f\over \partial y^{2}}(x,y,z)e^{-ik_{1}x}e^{-ik_{2}y}dxdy)e^{-ik_{3}z}dz|$\\

and, nesting the arguments, uniformly in $r$;\\

$|\int_{-r}^{r}(\int_{|x|>s}\int_{|y|>t}f(x,y,z)e^{-ik_{1}x}e^{-ik_{2}y}dxdy)e^{-ik_{3}z}dz|$\\

$\leq {E\over |(s,t)|}+{F\over |(s,t)||k_{2}|}+{G\over |s,t||k_{2}|^{2}}+{1\over |k_{2}|^{3}}|\int_{-r}^{r}\int_{|x|>s}\int_{|y|>t}{\partial^{3} f\over \partial y^{3}}(x,y,z)e^{-ik_{1}x}e^{-ik_{2}y}dxdy)e^{-ik_{3}z}dz|$\\

$(\dag\dag)$\\

Now, we can use the fact that ${\partial^{3}f\over \partial y^{3}}$ is of moderate decrease $4$, to see that ${\partial^{3}f\over \partial y^{3}}\in L^{1}(\mathcal{R}^{3})$, so that, uniformly in $r$, repeating the argument $(\dag)$;\\

$|\int_{-r}^{r}\int_{|x|>s}\int_{|y|>t}{\partial^{3} f\over \partial y^{3}}(x,y,z)e^{-ik_{1}x}e^{-ik_{2}y}dxdy)e^{-ik_{3}z}dz|$\\

$\leq \int_{-\infty}^{\infty}\int_{|x|>s}\int_{|y|>t}|{\partial^{3} f\over \partial y^{3}}|(x,y,z)dxdydz$\\

$\leq min(\int_{|x|>s}\int_{-\infty}^{\infty}\int_{-\infty}^{\infty}|{\partial^{3} f\over \partial y^{3}}|(x,y,z)dydzdx, \int_{|y|>t}\int_{-\infty}^{\infty}\int_{-\infty}^{\infty}|{\partial^{3} f\over \partial y^{3}}|(x,y,z)dxdzdy)$\\

$\leq min({2C\pi^{2}\over s},{2C\pi^{2}\over t})$\\

$\leq {2\sqrt{2}C\pi^{2}\over |(s,t)|}$\\

Now, from $(\dag\dag)$, we obtain that, uniformly in $r$;\\

$|\int_{-r}^{r}(\int_{|x|>s}\int_{|y|>t}f(x,y,z)e^{-ik_{1}x}e^{-ik_{2}y}dxdy)e^{-ik_{3}z}dz|$\\

$\leq {H\over |(s,t)|}$\\

which is $(iii)$. Combining with $(i),(ii)$, we obtain that, uniformly in $r$;\\\\

$|\int_{-r}^{r}(F(k_{1},k_{2},z)-\int_{-s}^{s}\int_{-t}^{t}f(x,y,z)e^{-ik_{1}x}e^{-ik_{2}y}dxdy)e^{-ik_{3}z}dz|$\\

$\leq {A_{k_{1}k_{2}k_{3}}\over s}+{B_{k_{1}k_{2}k_{3}}\over t}+{C_{k_{1}k_{2}k_{3}}\over |(s,t)|}$\\

where the constants $\{A_{k_{1}k_{2}k_{3}},B_{k_{1}k_{2}k_{3}},C_{k_{1}k_{2}k_{3}}\}\subset\mathcal{R}_{>0}$ can be read from the proof.\\

Applying the Moore-Osgood Theorem, it is then clear that;\\

$F(k_{1},k_{2},k_{3})=C(k_{1},k_{2},k_{3})$\\

$=lim_{r\rightarrow\infty,s\rightarrow\infty,t\rightarrow\infty}\int_{-r}^{r}\int_{-s}^{s}\int_{-t}^{t}f(x,y,z)e^{-ik_{1}x}e^{-ik_{2}y}e^{-ik_{3}z}dxdydz$\\

and similarly;\\

$G(k_{1},k_{2},k_{3})=B(k_{1},k_{2},k_{3})$\\

$=lim_{r\rightarrow\infty,s\rightarrow\infty,t\rightarrow\infty}\int_{-r}^{r}\int_{-s}^{s}\int_{-t}^{t}f(x,y,z)e^{-ik_{1}x}e^{-ik_{2}y}e^{-ik_{3}z}dxdydz$\\

$H(k_{1},k_{2},k_{3})=A(k_{1},k_{2},k_{3})$\\

$=lim_{r\rightarrow\infty,s\rightarrow\infty,t\rightarrow\infty}\int_{-r}^{r}\int_{-s}^{s}\int_{-t}^{t}f(x,y,z)e^{-ik_{1}x}e^{-ik_{2}y}e^{-ik_{3}z}dxdydz$\\

\end{proof}
\begin{lemma}
\label{normal5quasi}
If $f:\mathcal{R}^{3}\rightarrow\mathcal{R}$, the same result as Lemma \ref{normal5} holds, with the assumption that $f$ is quasi normal or quasi split normal.\\

\end{lemma}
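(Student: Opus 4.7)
The plan is to mirror the proof of Lemma \ref{normal5} line by line, and at each point where that proof invokes condition $(vi)$ of Definition \ref{normal3}, substitute the corresponding estimate obtained from $(vi)'$ in the quasi normal case or $(vi)''$ in the quasi split normal case, following the same replacement strategy already used in Lemma \ref{quasinormal}, Lemma \ref{quasinormal1} and Lemma \ref{normal4quasi}. Since Lemma \ref{normal4quasi} already furnishes quasi versions of the fibre objects $A(k_{1},y,z),B(x,k_{2},z),C(x,y,k_{3})$ together with $F(k_{1},k_{2},z),G(k_{1},y,k_{3}),H(x,k_{2},k_{3})$, and the moderate-decrease bounds used to define the iterated Fourier integrals, the six definitions at the top of the lemma go through with no modification.

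First I would check that the derivation of terms $(i)$ and $(ii)$ in the proof of Lemma \ref{normal5} needs no change: these rely only on integration by parts combined with the moderate-decrease bounds for $f,{\partial f\over \partial x},{\partial^{2} f\over \partial x^{2}},{\partial^{3} f\over \partial x^{3}}$ (and their $y$-analogues), and those bounds are guaranteed by $(iv),(v)$ of Definition \ref{normal3}, which are unchanged in both the quasi normal and quasi split normal variants. Hence the bounds ${2C\pi^{2}\over |k_{1}|^{3}t}$ and ${2C\pi^{2}\over |k_{2}|^{3}s}$ on $(i)$ and $(ii)$ transfer verbatim, and so does the preliminary very-moderate-decrease estimate on boundary terms arising from integration by parts.

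The substantive work is in the analogue of term $(iii)$, and within it the pieces $(a),(b),(c),(d)$. Each of these controls an oscillatory integral of a fibre of $f$ or of one of its first or second $x$-derivatives by the alternating-cancellation argument built in Lemma \ref{normal}, which relied on $(vi)$ to ensure that the non-monotone set consisted of at most $val$ isolated zeros. Under $(vi)'$, Lemma \ref{quasinormal} replaces the bound ${(val+1)4\pi\|g\|\over |k_{1}|}$ by $\bigl({(S+1)4\pi\over |k_{1}|}+R\bigr)\|g\|$, with $S$ and $R$ uniform in the remaining fibre parameters. I would insert this replacement eight times in $(a),(b)$, once in $(c)$, and once inside the recursion $(\dag\dag)$ which controls $(d)$; each substitution enlarges the final constant but preserves the decay in $1/|(s,t)|$, so the bound ${H\over |(s,t)|}$ on $(iii)$ survives. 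For the quasi split normal case one writes $f=f_{1}+f_{2}$ on the region where $(vi)''$ holds and runs the quasi normal estimate on each summand, exactly as at the end of Lemma \ref{quasinormal}.

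The expected main obstacle is uniformity. The constants $S,R$ promised by $(vi)'$ or $(vi)''$ depend on the fibre and are advertised only for sufficiently large parameters, whereas $(a),(b)$ in Lemma \ref{normal5} involve fibres at the specific corner points $(\pm s,\pm t)$ and the integrals in $(c),(d)$ sweep over all $(x,y)$ with $|x|>s,|y|>t$. One has to argue that, once $(s,t)$ is sufficiently large, all four corner fibres and every fibre over the exterior region lie within the range where the $S,R$ bounds apply; for smaller $(s,t)$ the integrands are controlled by continuity and absorb into an overall constant. Once this uniformity is pinned down, the final application of the Moore--Osgood theorem yielding $(\dag)$ and $(\dag\dag)$ is identical to the proof of Lemma \ref{normal5}, and the identifications $F=G=H=A=B=C$ follow automatically.
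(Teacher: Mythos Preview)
Your proposal is correct and follows exactly the approach of the paper's own proof, which consists of a single sentence instructing the reader to replace the uses of $(i)$--$(vi)$ in Definition \ref{normal3} within the proof of Lemma \ref{normal5} by $(i)'$--$(vi)'$ or $(i)''$--$(vi)''$, via the argument already given in Lemma \ref{normal4quasi}. Your write-up is in fact a more detailed unpacking of that terse instruction; one small inaccuracy is that term $(c)$ in Lemma \ref{normal5} is a pure $L^{1}$ estimate and does not invoke the $val$ bound, so no substitution is needed there, but this does not affect the argument.
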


\begin{proof}
Again, we can replace the use of $(i)-(vi)$ in Definition \ref{normal3}, within the proof of Lemma \ref{normal5}, by $(i)'-(vi)'$ or $(i)''-(vi)''$, with the argument used in Lemma \ref{normal4quasi}.
\end{proof}

\end{document}